\documentclass{article}

\usepackage{times}
\usepackage{graphicx} 
\usepackage{fullpage}
\usepackage{url}

\usepackage{subfigure} 
\usepackage{natbib}
\usepackage{algorithm}
\usepackage{algorithmic}



\usepackage{caption}
\usepackage{latexsym,amsmath,amssymb}
\usepackage{url}
\usepackage{bm}
\usepackage{xcolor}
\usepackage{algorithm}
\usepackage{algorithmic}
\usepackage{amsmath}
\usepackage{amssymb}
\usepackage{amsthm}
\usepackage{graphicx}
\urlstyle{same}

\usepackage{hyperref}
\hypersetup{
bookmarksnumbered
}

\def\R{\mathbb{R}}
\newcommand{\nnz}{\textrm{nnz}}
\newcommand{\trans}[2][]{#2^{#1\!\top}}

\newcommand{\E}{{\mathcal E}}

\newcommand{\tr}{\mathrm{Tr}}
\newcommand{\expect}{{\bf E}}
\newcommand{\prob}{{\bf Pr}}

\newcommand{\T}{{\mathcal{T}}}


\newtheorem{theorem}{Theorem}
\newtheorem{lemma}[theorem]{Lemma}
\newtheorem{claim}[theorem]{Claim}
\newtheorem{definition}[theorem]{Definition}

\newtheorem{remark}[theorem]{Remark}
\newenvironment{proofof}[1]{\noindent{\bf Proof of #1:}}{$\qed$\par}


\begin{document}

\title{How to Fake Multiply by a Gaussian Matrix}

\author{Michael Kapralov\thanks{michael.kapralov@epfl.ch}\\EPFL \and Vamsi K. Potluru\thanks{vamsi\_potluru@cable.comcast.com}\\Comcast Cable  \and David P. Woodruff\thanks{dpwoodru@us.ibm.com}\\IBM Research}


\maketitle

\begin{abstract}
Have you ever wanted to multiply an $n \times d$ matrix $X$, with $n \gg d$, 
on the left by an $m \times n$ matrix $\tilde G$ of i.i.d. Gaussian random variables, 
but could not afford to do it because it was too slow? In this work 
we propose a new randomized $m \times n$ matrix $T$, for which one can
compute $T \cdot X$ in only $O(\nnz(X)) + \tilde O(m^2 \cdot d^{3})$ time, for which
the total variation distance between the distributions $T \cdot X$ and $\tilde G \cdot X$
is as small as desired, i.e., less than any positive constant. Here
$\nnz(X)$ denotes the number of non-zero entries of $X$. Assuming 
$\nnz(X) \gg m^2 \cdot d^{3}$,
this is a significant savings over the na\"ive $O(\nnz(X) m)$ time to compute $\tilde G \cdot X$.
Moreover, since the total variation distance is small, we can provably use $T \cdot X$
in place of $\tilde G \cdot X$ in any application and have the same guarantees as if we were using
$\tilde G \cdot X$, up to a small
positive constant in error probability. 
We apply this transform to nonnegative matrix factorization (NMF) and support vector machines (SVM).

\end{abstract}

\section{Introduction}
\label{sec:intro}
One approach to handle high dimensional data, often in the form of a matrix, 
is to first project the data to a much lower dimensional subspace. 
This is an example of sketching and the last decade has seen a systematic study of this approach.
A linear sketch of a matrix replaces the original matrix by a smaller matrix
which is often obtained by a random projection of the original 
matrix~ (see, e.g., \cite{woodruff2014sketching} for a survey).  
Random projections have been successfully applied to speed up least squares regression and have been implemented with
remarkable success~\cite{avron2010blendenpik}. 
This is impressive considering the fact that these solvers have been highly optimized 
over the last few decades, exploiting both algorithmic improvements and machine dependent optimizations.

Many of these works rely on {\it fast projection matrices}, such as the Subsampled Randomized Hadamard Transform
or the CountSketch, the latter being particularly well-suited for sparse data (see, e.g., \cite{woodruff2014sketching}
and references therein). However, there are certain applications for which multiplying by a Gaussian matrix is 
the only way that is known to reduce the dimensionality of the data. This arises mainly because the application
requires rotational symmetry, which is often not preserved by other fast transforms, or because additional properties,
such as spreading out a sparse vector to a vector with non-spiky elements, do not hold for transforms like
CountSketch (some of these hold for the Fast Hadamard Transform, but the latter are not known to be able to exploit sparsity). We give
two such applications below, one to nonnegative matrix factorization (NMF), and one to support vector machines (SVM). 

\subsection{Our Results}
{\bf A New Randomized Transform.} 
In this work we propose a new randomized transform $T$, which we call the CountGauss. It is simply a product of a
CountSketch matrix and a Gaussian matrix.
That is, given an $n \times d$ matrix $X$ which we would like to 
multiply by an $m \times n$ matrix $\tilde G$ of Gaussians, we instead let $T = G \cdot S$, where $S$ is a
$B \times d$ CountSketch matrix  
where $B = \tilde{O}(d^{2}m)$, and $\tilde G$ is an $m \times B$ matrix of i.i.d. 
Gaussians.
Recall that a CountSketch matrix $S$ satisfies that each column of $S$ has only a single non-zero
entry chosen in a uniformly random position. That non-zero is $1$ with probability $1/2$, and $-1$ with
probability $1/2$. The columns of $S$ are independent of each other. 
Importantly, computing $S \cdot X$ can be done in $O(\textrm{nnz}(X))$ time, and this significantly reduces
the number of rows of $X$. Then computing $G \cdot (S \cdot X)$ can now be done in $\tilde{O}(m^2 d^{3})$ time.
While such a composition of matrices has been used before in the context of subspace embeddings for regression,
see, e.g., \cite{cw13}, here we show a new property of this composition - the distribution of $G \cdot S \cdot X$
looks like the distribution of $\tilde G \cdot X$! Formally, the statistical distance between the two distributions
is smaller than any positive constant.

{\it Therefore, in any application which uses $\tilde G \cdot X$, if we replace
$\tilde G \cdot X$ with $G \cdot S \cdot X$, 
then if $p$ is the success probability of the old algorithm, then the success probability
of the new algorithm is at least $p - \delta$, where $\delta > 0$ is an arbitrarily small constant.}

We now give applications. 
\\\\
{\bf Non-negative Matrix Factorization.}
Learning low rank structures and representations is a fundamental problem in machine learning. 
With the rise of data-driven decision making, many businesses, government agencies, 
and scientific laboratories are collecting increasingly 
large amounts of data each day. 
For instance, the large Hadron Collider (LHC) 
experiments represent about $150$ million sensors acquiring around $40$ million samples
per second. Even working with $0.001$ percent of the sensor data, the data flow from all four 
LHC experiments is around $25$ petabytes per day~\cite{brumfiel2011down}. 
This means the traditional approach of storing the data, and then processing it 
later, may be infeasible.
One approach would be to subsample the incoming streams. However, we may lose valuable 
information in the form of infrequent events.

We use our transform to solve the nonnegative matrix factorization (NMF) problem. 
Previous approaches~\cite{damle2014random,benson2014scalable,tepper2015compressed} 
have used random matrices for the projection. However, these approaches can be slow if the
dimensionality of the data is high since they rely on multiplying by Gaussian matrices, 
e.g., for natural images or structural Magnetic Resonance 
Imaging brain scans. Recent work by Smola et al.~\cite{le2013fastfood} have shown that
sometimes dense random Gaussian matrices can be replaced by faster transforms, 
and moreover, each row of the transform is equally likely to be in any 
direction on the unit sphere. To show the correctness of the NMF algorithm, however, 
we need a much stronger property than this, namely that any small
subset of rows of the transform has the property that its product with
a fixed matrix $X$ has low variation distance to the distribution of a product
of a Gaussian matrix with $X$. These latter properties, of having a fast transform with 
{\it equal representation of directions on the sphere}, 
do not seem to have been exploited in the context of NMF. Our transformation, since it
has low variation distance to multiplying by a Gaussian matrix, directly applies here and
we can use existing analysis. 

We note that the classical way of speeding up Gaussian
transforms via the Fast Hadamard or Fast Fourier Transform (see, e.g., \cite{t11}) 
do not work in this context, since they miss large sections of the sphere, and we provide 
a formal counterexample in Section~\ref{sec:counterexample}. Intuitively, while it is fine to miss directions along large 
sections of the sphere to approximate the norm of a vector,
it is not fine to miss directions for NMF, where the corresponding 
polytope partitions the sphere into a small number of caps, and each cap should have a random 
direction chosen from it. 
\\\\
{\bf Support Vector Machines.}
We also apply random projections to the support vector machines (SVM) problem. 
Previously, the CountSketch (CW)~\cite{clarkson2013count} projection and random Gaussian (RG) projection  
have been applied to the linear SVM problem.
Despite Countsketch being much faster than the 
Gaussian projection, the overall running time of projection together with the SVM solver was similar 
for both 
projections~\cite{paul2014random}, since the training of the projected data was faster when using
Gaussian projections. Our projection combines the CW matrix with 
a smaller Gaussian matrix thereby getting the best of both worlds --- similar projection 
time as CountSketch and similar 
Gaussian properties of RG that are useful for SVM.
\\\\
{\bf Experiments.}
We empirically validate our results for both NMF and SVM applications. 
For NMF, we give an experimental evaluation by comparing with state-of-the-art algorithms 
such as SPA~\cite{gillis2014fast}, XRAY~\cite{kumar2013fast}, na\"ive random projections~\cite{damle2014random} ,
structured Gaussian random projections~\cite{tepper2015compressed}, and Tall-Skinny QR 
factorization~\cite{benson2014scalable} for NMF problems with applications to breast cancer, 
flow cytometry, and climate data.
Also, we show experimental speedups using our projection when combined with linear SVM solvers
 for document classification problems~\cite{paul2014random}. 

\section{A New Randomized Transform}\label{sec:newTransform}
A CountSketch matrix $S\in \mathbb{R}^{B\times n}$ is a matrix all of whose rows have exactly one nonzero in a uniformly random location, and the value of the nonzero element is independently chosen to be $-1$ or $+1$ with equal probability. We denote the number of rows in the CountSketch matrix by $B$.

We prove the next theorem\footnote{In the conference version of the paper the authors made the stronger claim that $B\approx \sqrt{m} d^2$ suffices for the same guarantee, but that was in error. We provide a lower bound (Lemma~\ref{lm:lb}) showing that the result of Theorem~\ref{thm:main} is essentially tight.}, which gives the formal guarantees of our new transform.
\begin{theorem}\label{thm:main}
There exists an absolute constant $C>0$  such that for every $\delta\in (0, 1)$, every integer $m\geq 1$ and every matrix $U\in \R^{n\times d}$ with orthonormal columns if $B\geq \frac1{\delta^2}C d^2\cdot m$,  $S\in \mathbb{R}^{B\times n}$ is a random CountSketch matrix, and $G\in \R^{m\times B}$ and $\tilde G\in \R^{m\times n}$ are matrices of i.i.d. unit variance Gaussians, then the total variation distance between the joint distribution $G S U$ and $\tilde GU$ is less than $\delta$.
\end{theorem}
The proof is given in Section~\ref{sec:countsketch}.
We note that Theorem~\ref{thm:main} applies to matrices $U$ with orthonormal columns. This is sufficient for applying our transform to an arbitrary matrix $X$, since we can write $X=UR$, where the columns of $U$ form an orthonormal basis for the range of $X$, and apply the theorem to $U$.
Since $G\cdot S\cdot U$ is close to $\tilde G\cdot U$ in total variation distance, $G\cdot S\cdot UR=G\cdot S\cdot X$ is close to $\tilde G\cdot UR=\tilde G\cdot X$ in total variation distance as well.
We note that the role of $d$ and $n$ in Theorem~\ref{thm:main} is swapped in comparison to our notation for the application to NMF below. The notation in Theorem~\ref{thm:main} is more consistent with the numerical linear algebra literature, and we thus prefer to state the theorem in this form.

We now present the intuition behind Theorem \ref{thm:main}. Consider the distribution of the first row of the two matrices, namely $\tilde G U$ versus $G SU$. Both random variables are Gaussians in dimension $d$, but while the former is an ideal isotropic Gaussian, the latter, despite being Gaussian, has correlated entries. The correlations between the entries are due to the fact that the CountSketch matrix $S$ is not a perfect isometry: the correlation is given exactly by $U^TS^TSU$, which is the identity in expectation, but not for most realizations of $S$. In order to show that these two distributions are close in total variation distance, it would suffice to argue that the covariance matrix $U^TS^TSU$ is sufficiently close to the identity. This is exactly how the proof of Theorem \ref{thm:main} proceeds, which fixes an $S$ for which $U^TS^TSU$ is sufficiently close to the identity, using a so-called ``approximate matrix product'' theorem in the linear algebra community. After fixing such an $S$,
one can use that the rows of $G \cdot S$ and the rows of $\tilde{G}$ are independent, and then bound the variation distance between individual rows of $G \cdot S$ and of $\tilde{G}$. For the latter, it is convenient to work with Kullback-Leibler divergence (KL divergence) which is additive over product spaces; here we bound the KL divergence between a standard multivariate Gaussian and one with covariance matrix $U^TS^TSU$. 

The result of the theorem is essentially tight -- we show in Lemma~\ref{lm:lb} that if $B\leq md^2/(C\log d)$ for a sufficiently large constant $C>0$, then the total variation distance between $GU$ and $\tilde G S U$ is lower bounded by $e^{-1}(1-o(1))$. The proof looks at a particular matrix $U$ with large leverage scores, and builds a distinguisher.

\section{Preliminaries for the Applications}\label{sec:prelim}
A few applications of our new randomized transform are NMF and SVM, which we now formally define. 
\subsection{Nonnegative Matrix Factorization}
Given a nonnegative matrix $X$ of size $d \times n$, we would like to approximate it as a product of nonnegative matrices as follows:
$X \approx WH,$ 
where $W$ is of size $d \times k$ and $H$ is $k \times n$. 
This problem was studied by Paatero and Tapper~\cite{paatero1994positive} under the name of positive matrix factorization and gained a wider popularity through the work of Lee and Seung~\cite{lee2001algorithms}.
NMF arises in a wide range of problems and application domains
such as curve resolution in chemometrics and document clustering; further references can be found in~\cite{arora2012computing}. 
Various extensions to the original model to incorporate domain knowledge such as sparsity, orthogonality~\cite{ding2006orthogonal}, and under-approximation~\cite{gillis2010using} have also been studied.   
Commonly used measures of approximation include the Frobenius norm, Itakuro-Saito (IS), and Bregman divergence with applications in image processing, speech and music analysis~\cite{yilmaz2011generalised} among other places.
Typical algorithms use alternating minimization to solve the non-convex objective function arising from NMF.

Until recently, the complexity of the NMF problem was unknown. Vavasis established that the NMF problem is NP-hard~\cite{vavasis2009complexity}. However, if the data satisfies the separability condition, a condition 
introduced by Donoho and Stodden~\cite{donoho2003separable}, 
then tractable algorithms exist and have been recently proposed by Arora et al.~\cite{arora2012computing,recht2012factoring}.
Formally, a nonnegative matrix $X$ is $k$-separable if it satisfies the following condition:
$X = X_{I} H$, 
where $I$ is an index set of size $k$ 
corresponding to the columns of the data matrix $X$.
Geometrically, this assumption implies that the columns of $X$ 
lie in a cone generated by the $k$ selected
columns of $X$ indexed by $I$. 
One can view these $k$ selected columns as the extreme points of a polytope containing all other columns.
In practice, $k$ is much smaller than both $d$ and $n$. We will assume $k$-separability.

Given $X_{I}$, one can solve for $X$ by solving a nonnegative
least squares problem \cite{damle2014random}, 
and therefore our focus is on finding $X_{I}$, or equivalently, the index
set $I$ of extreme points of the point cloud formed by the columns of $X$. 

To understand the guarantees of our algorithm, we define a few geometric notions also
used in \cite{damle2014random}, which we refer to for more background. 
The {\it normal cone}
of a convex set $C$ at a point $x$ is the cone
$$N_C(x) = \{w \in \mathbb{R}^d \mid w^T(y-x) \leq 0 \textrm{ for
any } y \in C\},$$
that is, it is the cone defined by the outward normals of supporting hyperplanes at the point
$x$. One can define a measure $\omega(K)$ on any cone $K$, which for full-dimensional
cones $K$ satisfies $\omega(K) = \Pr[\theta \in K \cap S^{d-1}]$ where $\theta$ is a uniformly
random point on the sphere $S^{d-1}$ in $d$ dimensions. This measure is known as the 
{\it solid angle} of $K$. For any convex polytope $C$, 
if $P$ is the set of its extreme points,
then $\sum_{p \in P} \omega(N_C(p)) = 1,$
that is, the solid angles of the normal cones at the extreme points sum to $1$.
If we label the points $p_i \in P$, we will use the shorthand
$\omega_i = \omega(N_C(p_i))$. 

A key property we will use is that for a unit vector $u$ and a convex set $C$, 
the maximum inner product of $u$ with any point $p \in C$ is  
achieved by an extreme point $p$ of $C$. Moreover, the maximum
is achieved by the extreme point $p$ precisely when $u \in N_C(p)$. 
This follows since the inner product with a fixed vector $u$ is a linear function,
which is maximized by an extreme point for any convex set. These conditions also
hold if we replace maximum with minimum. 

Our results, as in \cite{damle2014random}, depend on the condition number
$\kappa = \frac{1}{k \log \left (\frac{1}{\max_i 1- 2\omega_i} \right )}.$
The larger $\kappa$ is, the more pointed the polytope defined by the columns
of $X$ is, whereas if $\kappa$ is small, the polytope has ``fatter'' vertices. 

\subsection{Support Vector Machines}
Given a dataset of samples and labels $\{x_i,y_i\}_{i=1}^N$ where $x_i$ corresponds to sample $i$ and $y_i$ the
corresponding label belonging to one of two classes denoted by $\{-1, 1\}$, we would like to find a maximum-margin hyperplane that separates
the two classes. 
The primal form for the linear SVM problem is as follows: 
\begin{align}
\min_{w} \frac{1}{2} \|w\|_2^2 + \frac{C}{N} \sum_{i=1}^N \max(0, 1 - y_i \langle w, x_i \rangle)
\end{align}
where $C$ is the soft margin parameter which allows for mis-classfication errors in the dataset and $w$ is
the maximum margin hyperplane that we are learning from the data.
The dual form for the linear SVM problem is given as follows:
\begin{align}
\max_{0 \le \alpha \le C} 1^T \alpha - \frac{1}{2} \alpha^T Y X X^T Y \alpha 
\end{align}

Previously~\cite{paul2014random} have shown that the margin (hyperplane) and minimum enclosing ball of the
original data are preserved after projection up to a multiplicative factor. However, in their original formulation
it is possible to just replace all points with zero to achieve the same guarantee. 
We strengthen the theorems by requiring that the projected data
upper bound the objective of the original data. The details are given in Section~\ref{sec:svm-theorem}.

\section{Application to NMF}
We consider the separable NMF problem as defined in Section \ref{sec:prelim}. 
We first review an algorithm proposed by ~\cite{damle2014random}. 
%
Their algorithm involves the computation of $\tilde GX$ where $\tilde G$ 
has dimensions $m \times d$ for a parameter $m$, and the 
entries are distributed independently as $N(0,1)$ random variables.
Notice that we need to first compute the $m \times d$ random matrix $\tilde G$
which
is itself dense. We also need to compute the matrix product $\tilde GX$ 
with the input data. This is computationally expensive and is of order
$O(mnd)$ in practice. Fast matrix multiplication routines
~\cite{cw90,w12} can be used in theory, but the time will still be 
at least $k^{\omega-2}nd$, where $\omega \approx 2.376$ is the exponent of fast matrix
multiplication.
Instead, we propose to use our new transform 
to 
significantly speed up the computations for extracting the
extreme points in the dataset.
Note that both approaches are easily amenable to distributed-data settings by simply sharing
the seed of the random number generator which allows identical matrix transformations on all the computational nodes.
Our new algorithm is called \textbf{C}ount \textbf{G}auss NMF or \textbf{CountGauss} and is as follows:
\begin{algorithm}
    \caption{$\textrm{\textbf{C}ount\textbf{G}auss NMF (\textbf{CG})}$}
\label{alg:sketchNMF}
Initialize the index sets $I_{max},I_{min}$ to empty.\\
\begin{enumerate}
\item Let $T=G \cdot S$ where $G$ is an $m \times B$ 
matrix of i.i.d Gaussians, and $S$ is a $B \times d$ CountSketch
matrix. Here $B = Cn^{2}m/ \delta^2$.  
\item Compute the product $Z = TX$.
\item Find the indices which give the maximum and 
minimum across each row of $Z$ corresponding to $I_{max},I_{min}$
\end{enumerate}
\end{algorithm}
Instead of using Gaussian random matrices for the projection, we approximate them by the following projection
matrix $T=G \cdot S,$ where the matrices are defined in 
Algorithm \ref{alg:sketchNMF}. 

Consider the convex polytope defined by the columns of $X$ and their
negations. As defined in Section \ref{sec:prelim}, we assume 
$k$-separability, namely, that there are $k$ columns of $X$, indexed by $I$, 
for which $X = X_I H$ for a nonnegative matrix $H$. The columns 
of $X_I$ are the extreme 
points of a convex polytope $C$. By definition of an extreme point of a 
convex polytope, the indices found in step 3 of
Algorithm \ref{alg:sketchNMF} belong to the index set $I$. 

Damle and Sun show the following.
\begin{theorem}(Theorem 3.3 of \cite{damle2014random})\label{thm:damleMain}
Consider a modification to Algorithm \ref{alg:sketchNMF} in which we
replace $T$ by an $m \times d$ matrix of i.i.d. $N(0,1)$ random variables,
where $m = \kappa k \log(\frac{k}{\delta})$, 
where recall $\kappa = \frac{1}{k \max_i 1-2\omega_i}$ is the condition number. 
Then the probability that
the output $I_{min} \cup I_{max}$ of Algorithm \ref{alg:sketchNMF} contains
the index set $I$ of extreme points of $X$ is at least $1-\delta$.
\end{theorem}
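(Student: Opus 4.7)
The plan is to interpret the algorithm geometrically and then reduce to a standard occupancy argument. By the extreme-point characterization recalled in Section~\ref{sec:prelim}, the index returned by $\arg\max_\ell \langle g_i, X_\ell\rangle$ is automatically an extreme point of $\mathrm{conv}(X_1,\dots,X_n)$, and by $k$-separability every such extreme point must lie in $I$; the same holds for $\arg\min$. Thus the output $I_{\max}\cup I_{\min}$ is always a subset of $I$, and the only question is whether \emph{every} $j\in I$ is recovered by at least one of the $m$ rows of $\tilde G$.

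I would then compute the per-row hit probability for a fixed $j\in I$. Work inside the symmetric polytope $C=\mathrm{conv}(\pm X_1,\dots,\pm X_n)$ whose extreme points are $\{\pm X_j:j\in I\}$. If the Gaussian row $g_i$ falls in $N_C(X_j)$, the definition of the normal cone immediately yields $\langle g_i, X_j\rangle \geq \langle g_i, X_\ell\rangle$ for every $\ell$, so $X_j$ is the argmax over the columns of $X$; symmetrically, if $g_i\in N_C(-X_j)=-N_C(X_j)$, then $-X_j$ maximizes $\langle g_i,\cdot\rangle$ over $C$, which forces $\langle g_i,X_j\rangle\leq \langle g_i,X_\ell\rangle$ for every $\ell$, so $X_j$ is the argmin. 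These two cones are disjoint off the origin (a null set) and, by the spherical symmetry of the Gaussian, each has measure exactly $\omega_j$ under the normalized direction $g_i/\|g_i\|$. Therefore
\begin{equation*}
\Pr[\text{row } i \text{ does not return } X_j] \leq 1-2\omega_j.
\end{equation*}

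Finally, the $m$ rows of $\tilde G$ are independent, so $\Pr[X_j \text{ is never returned}]\leq (1-2\omega_j)^m$, and a union bound over the $k$ points of $I$ gives
\begin{equation*}
\Pr\bigl[I\not\subseteq I_{\max}\cup I_{\min}\bigr]\leq k\bigl(\max_{j\in I}(1-2\omega_j)\bigr)^m.
\end{equation*}
Bounding this by $\delta$ and solving for $m$ yields $m\geq \log(k/\delta)\big/\log\bigl(1/\max_j(1-2\omega_j)\bigr)=\kappa k\log(k/\delta)$, which is exactly the hypothesis. The main conceptual step — and the only real obstacle — is recognizing that combining argmax with argmin and passing to the symmetric polytope lets us collect a factor of $2\omega_j$ per trial rather than just $\omega_j$; this doubling is precisely what the definition of $\kappa$ is designed to exploit, and without it one would pay an extra factor of two in $m$. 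Continuity of the Gaussian takes care of ties.
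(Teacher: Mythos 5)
Your proof is correct. The paper does not actually prove this statement---it is imported verbatim as Theorem~3.3 of \cite{damle2014random}---and your argument is precisely the standard one from that reference: the argmax/argmin over columns always lands on an extreme point, a Gaussian direction selects a fixed $X_j$ as the max or the min with probability exactly $2\omega_j$ by rotational invariance and the disjointness of $N_C(X_j)$ and $N_C(-X_j)$, and independence over the $m$ rows plus a union bound over the $k$ anchors gives failure probability $k(\max_j(1-2\omega_j))^m\le\delta$ for the stated $m$. One small remark: your final arithmetic relies on $\kappa=\frac{1}{k\log(1/\max_i(1-2\omega_i))}$, which is the definition given in the paper's preliminaries; the restatement inside Theorem~\ref{thm:damleMain} drops the logarithm, which appears to be a typo, and your version is the one that makes $m=\kappa k\log(k/\delta)$ come out exactly right.
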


Using Theorem \ref{thm:main}, we analyze the performance of 
Algorithm \ref{alg:sketchNMF}.

\begin{theorem}\label{thm:algAnalysis}
Let $\delta > 0$ be given. 
Suppose in Algorithm \ref{alg:sketchNMF} we set 
the parameter $m = \kappa k \log(\frac{k}{\delta})$, where 
$\kappa = \frac{1}{k \max_i 1-2\omega_i}$ is the condition number, and choose $B\geq C n^2 m / \delta^2$ for a sufficiently large constant $C>1$ as per Theorem~\ref{thm:main}. 
Then the probability that
the output $I_{min} \cup I_{max}$ of Algorithm \ref{alg:sketchNMF} contains
the index set $I$ of extreme points of $X$ is at least $1-2\delta$.
\end{theorem}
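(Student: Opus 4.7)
The plan is to use Theorem \ref{thm:damleMain} essentially as a black box for the algorithm's success with a true Gaussian transform, and to use Theorem \ref{thm:main} together with a data-processing argument to transfer that guarantee to the CountGauss transform $T = G \cdot S$. Concretely, since the total variation distance between the sketched data $TX$ and the ideal Gaussian sketch $\tilde G X$ is at most $\delta$, any event of probability at least $1 - \delta$ under $\tilde G X$ has probability at least $1 - 2\delta$ under $TX$, and the event of interest -- that steps 2--3 of Algorithm \ref{alg:sketchNMF} produce an index set containing $I$ -- is exactly such an event by Theorem \ref{thm:damleMain}.

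The first step is to reduce to the orthonormal-columns setting required by Theorem \ref{thm:main}. Although $X \in \mathbb{R}^{d \times n}$ is arbitrary, we can use a (thin) QR or SVD decomposition to write $X = UR$ where $U \in \mathbb{R}^{d \times r}$ has orthonormal columns, $r \leq \min(d, n) \leq n$, and $R \in \mathbb{R}^{r \times n}$ is a fixed deterministic matrix once $X$ is fixed. Note the dimensional swap relative to the statement of Theorem \ref{thm:main}: here the CountSketch $S \in \mathbb{R}^{B \times d}$ acts on a $d$-dimensional space (so $d$ plays the role of ``$n$'' in Theorem \ref{thm:main}), and $U$ has at most $n$ orthonormal columns (so $n$ plays the role of ``$d$'' there). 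Hence the requirement $B \geq \tfrac{1}{\delta} C (\log d)^4 \cdot n^2 \cdot m^{1/2}$ in the statement of Theorem \ref{thm:algAnalysis} is exactly what Theorem \ref{thm:main} asks for (using $r \leq n$ to upper-bound the quadratic factor), and we conclude that the joint distribution of $GSU$ is within total variation distance $\delta$ of the joint distribution of $\tilde G U$.

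Next, I would push this closeness through to the full sketch $TX$ and then through the remainder of the algorithm using the data-processing inequality for total variation distance. Right-multiplication by the deterministic matrix $R$ is a measurable map, so $d_{TV}(GSX, \tilde G X) = d_{TV}(GSUR, \tilde G U R) \leq d_{TV}(GSU, \tilde G U) \leq \delta$. Steps 2 and 3 of Algorithm \ref{alg:sketchNMF} are a further deterministic function of $TX$ (computing row-wise argmin and argmax over the sketched product), so applying data processing once more yields that the total variation distance between the output index set produced with $T = GS$ and with $T = \tilde G$ is at most $\delta$. By Theorem \ref{thm:damleMain}, the event $\{I \subseteq I_{\min} \cup I_{\max}\}$ has probability at least $1 - \delta$ under the true Gaussian sketch, hence probability at least $1 - \delta - \delta = 1 - 2\delta$ under the CountGauss sketch, which is the stated conclusion.

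The argument is essentially a reduction, so the main obstacle is not conceptual but bookkeeping: verifying that the roles of $n$ and $d$ are consistently swapped between Theorem \ref{thm:main} and the NMF application, that the orthonormal-columns reduction $X = UR$ is compatible with how $S$ acts, and that the parameter $m = \kappa k \log(k/\delta)$ chosen for Theorem \ref{thm:damleMain} is the very same $m$ appearing inside the lower bound on $B$. As long as these identifications are made carefully, everything else follows from the general principle that total variation distance cannot be increased by postprocessing, and no further probabilistic analysis is required beyond what is already encapsulated in Theorems \ref{thm:main} and \ref{thm:damleMain}.
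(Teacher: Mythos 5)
Your proof is correct and follows essentially the same route as the paper: invoke Theorem~\ref{thm:main} with the roles of $n$ and $d$ swapped to get total variation distance at most $\delta$ between the sketched data under $T=GS$ and under a true Gaussian $\tilde G$, then transfer the guarantee of Theorem~\ref{thm:damleMain} by a union bound (equivalently, the data-processing argument you spell out). The one difference is that the paper first notes the algorithm's output depends only on $TX_I$ (a linear functional on the polytope is extremized at an extreme point) and applies Theorem~\ref{thm:main} to the $d\times k$ matrix $X_I$, whereas you apply it to the full $X=UR$ with $\mathrm{rank}(X)\le n$ orthonormal columns; both instantiations are covered by the stated hypothesis $B\ge \frac{1}{\delta}C(\log d)^4\, n^2\, m^{1/2}$, though the paper's reduction would in principle permit the smaller choice $B\propto k^2\sqrt{m}$.
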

\begin{proof}
Let $I$ be the index set of extreme points of the polytope defined by
the columns of $X$. By definition of an extreme point, 
in each iteration of step 4 of the algorithm, we 
add an index $i \in I$ to $I_{max}$ and an index $j \in I$ to $I_{min}$ (since we are taking
the inner product with a linear function). 
Therefore, the behavior of Algorithm \ref{alg:sketchNMF} is the same
if we instead, in each invocation of step 2, compute the product
$Z = T X_I$. 

By our assumption on $m$, since $X_I$ is a $d \times k$ matrix we may
apply Theorem \ref{thm:main}, with the role of $n$ and $d$ in that theorem swapped,
to obtain that the variation
distance of the distributions of $T X_I$ and $\tilde G X_I$ is at most
$1 - \delta$, where $\tilde G$ is a matrix of i.i.d. $N(0,1)$ random
variables. Therefore, we can apply Theorem \ref{thm:damleMain} to conclude by a union bound that
the output of Algorithm \ref{alg:sketchNMF} contains the set $I$ with probability
at least $1-2\delta$.
\end{proof}
We obtain the same guarantee as in Theorem \ref{thm:damleMain} with considerably faster computation time.
Indeed, our
matrix product $TX$ can be computed in
$O(\textrm{nnz}(X)) + \tilde{O}(m^{2}n^{3})$ time using our transform $T$, 
as opposed to the $O(dnm)$ time needed in
\cite{damle2014random} to compute the product $ \tilde{G} X$ for a matrix
 $\tilde{G}$ of i.i.d. Gaussians. This is significant when $d$ is very large. 
%

{\bf Distributed Environments:} Our results naturally provide 
solutions to NMF in a distributed environment
in which the columns of $X$ are partitioned across multiple servers. 
Indeed, the servers can agree upon a short random seed of length $O(d)$ words
to generate $T$. Each server can then compute its local sets
$I_{max},I_{min}$, and send them to a coordinator who can find the global
maxima and minima. 
\begin{figure*}[ht!]
\includegraphics[width=0.47\linewidth]{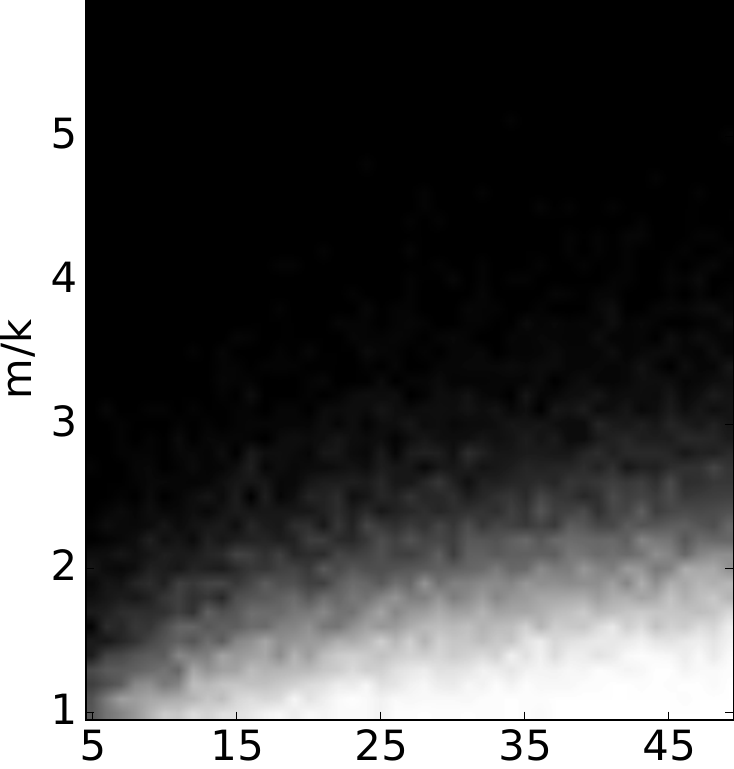}
\hfill
\includegraphics[width=0.42\linewidth]{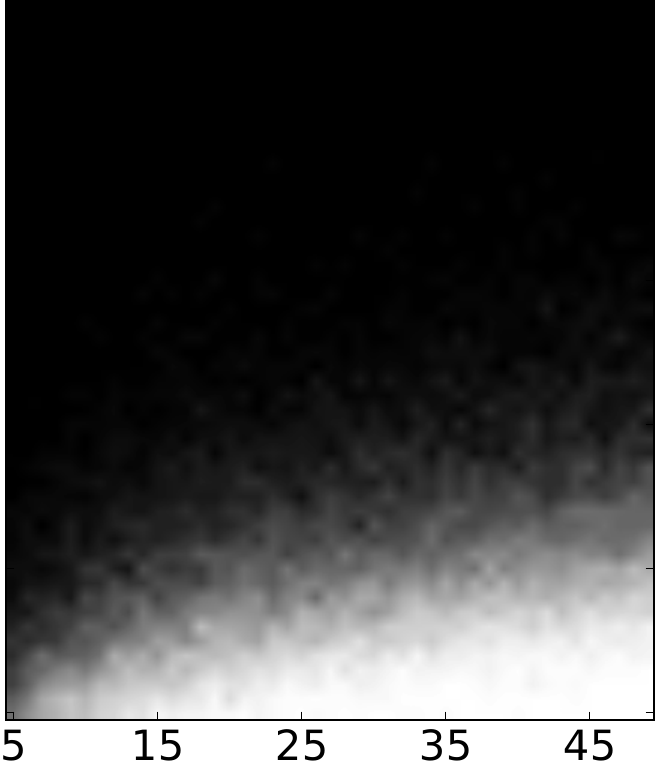}
\caption{The fraction of trials in which the CG algorithm correctly extracted all ``k' extreme
    points. For each value of $k$ and $m$, we generate $500$ matrices, such that the data matrix is of size $1000 \times 500$, and
show how often we successfully recovered the original anchors (black indicates success). (Left) We contrast Gaussian random projections (GP) with (right) our algorithm countGauss. Note that we recover the anchors with a similar success rate as GP.}
\label{fig:exact_recover}
\end{figure*}
\begin{figure*}[ht!]
\begin{center}
\includegraphics[width=0.97\linewidth]{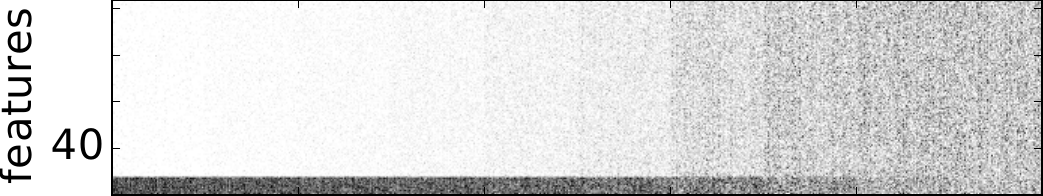}
\includegraphics[width=0.98\linewidth]{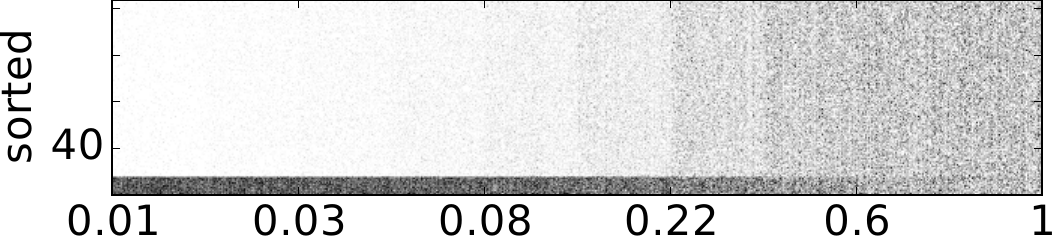}
\end{center}
\caption{We show the scree plots at $20$ noise levels
and notice that there are sharp transitions at $20$ corresponding to the rank of the data. (Top) Gaussian random projections and (bottom) our algorithm countGauss are applied to the dataset. For each
noise value in $\{0.01, 0.02, 0.03, 0.05, 0.08, 0.12, 0.22, 0.36, 0.6,1\}$, we generate $100$ datasets. At higher noise levels, we note that
both the algorithms GP and CG have most of the features active and there is no longer a sharp transition at $20$.}
\label{fig:noise_recover}
\end{figure*}

\section{Other Related work}
Over the last couple of years, many approaches have been proposed to solve the separable-NMF problem. \\
\textbf{XRAY}
Selects the anchors one at a time by expanding a cone until all columns in the dataset are contained in it. At each step, XRAY finds the datapoint (column) which maximizes the inner product with the current residual matrix. It then computes the
residual matrix corresponding to the new set of anchor points~\cite{kumar2013fast}.\\
\textbf{SPA}
Successive projection algorithm~\cite{araujo2001successive,gillis2014fast} is a family of recursive algorithms where the projections are given by strongly convex functions. \\
\textbf{TSQR}
Use tall and thin QR factorization when the number of rows/features is large~\cite{benson2014scalable}. 
This approach is especially attractive when the number of features is really large ($\gg 10^6$) and the number of samples
is small ($<10^5$).\\
\textbf{SC}
In~\cite{tepper2015compressed}, an algorithm similar to the one proposed by Damle and Sun
~\cite{damle2014random} is proposed. 
The 
difference is that instead of choosing a Gaussian or FastFood projection matrix, the projection is chosen
to be a matrix which depends on $X$ (data dependent projection), namely, one that is found via the subspace power
iteration (see Figure 3 of \cite{tepper2015compressed}). This approach is expensive in the case of distributed settings
since the projection matrix depends on all the samples.
\begin{figure}[ht!]
\begin{center}
    \includegraphics[width=0.97\linewidth]{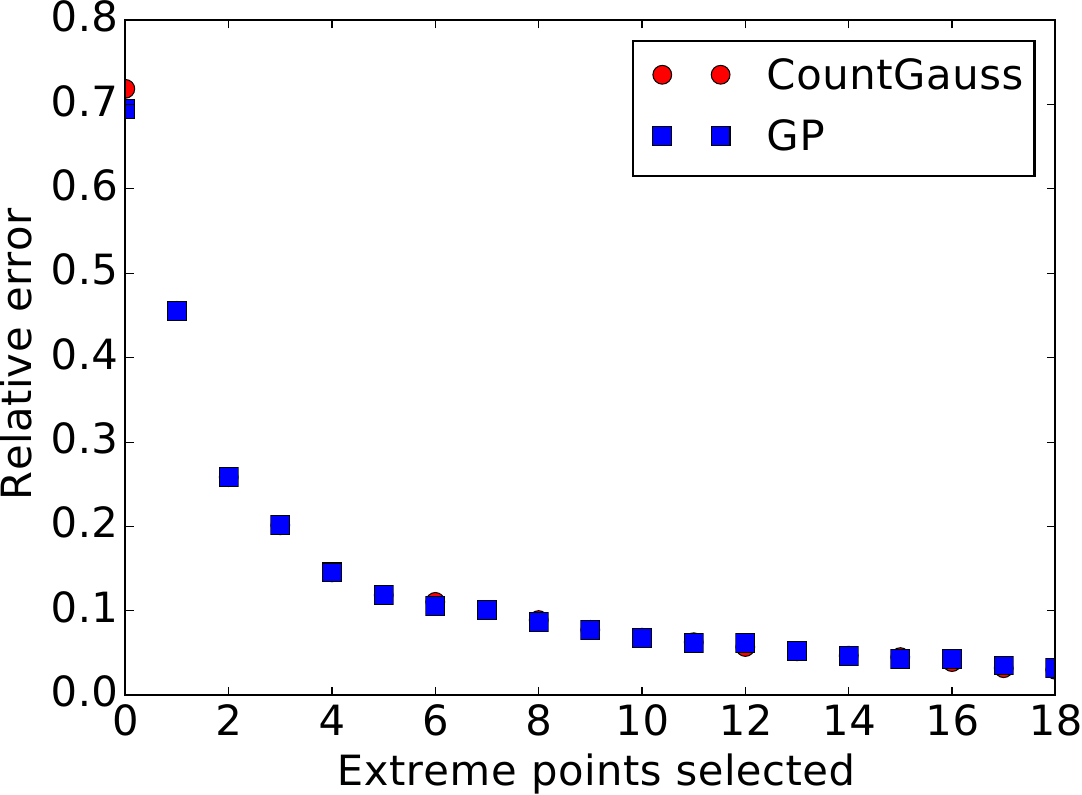}
\end{center}    
\caption{Relative reconstruction error as a function of the anchors selected by the two algorithms CountGauss and GP is shown.
They are remarkably similar.}
\label{fig:breast_comp}
\end{figure}
\begin{figure}[ht!]
\includegraphics[width=0.47\linewidth]{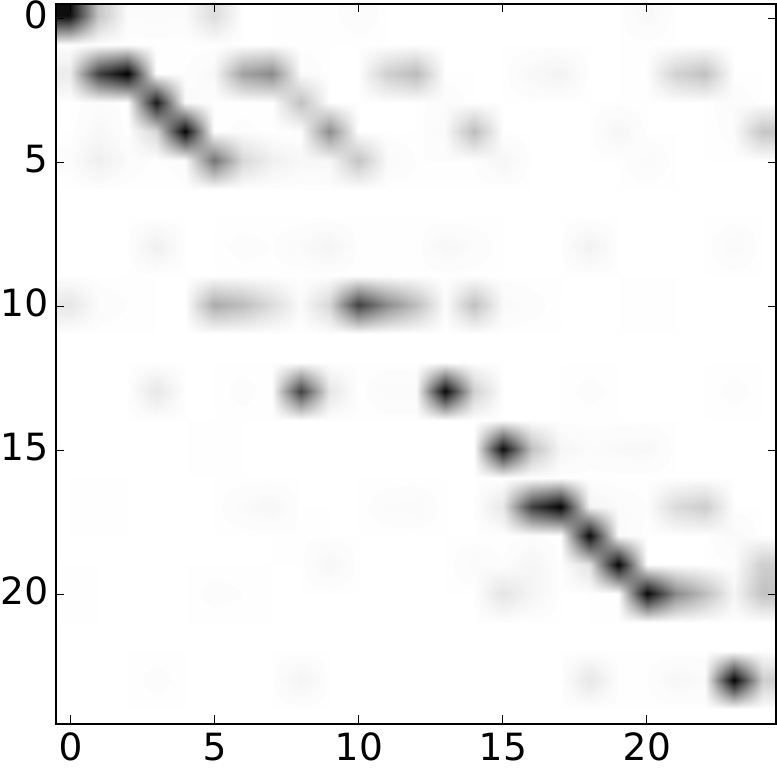}
\hfill
\includegraphics[width=0.47\linewidth]{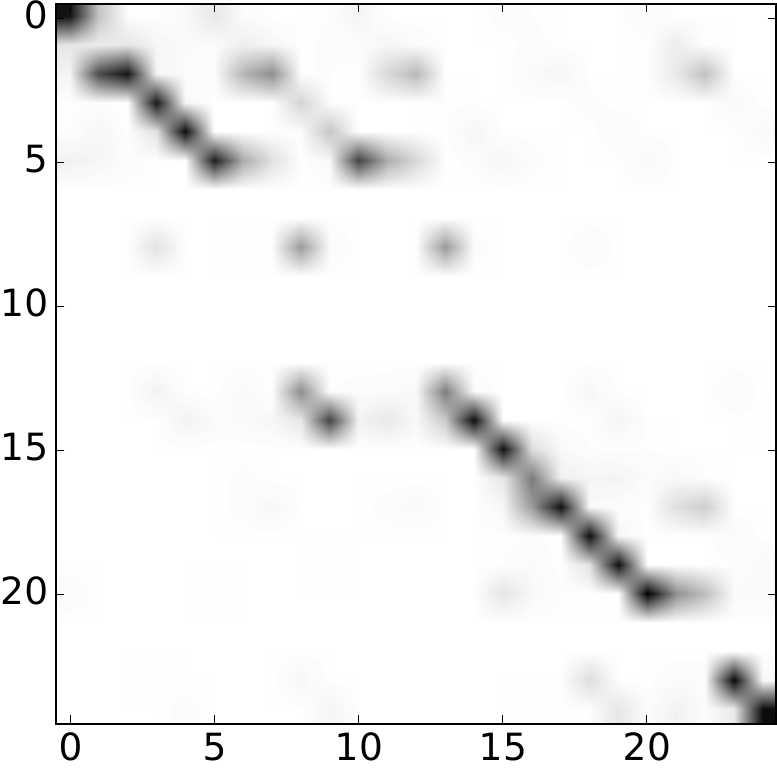}
\caption{Coefficient matrices $H$ are shown for the two algorithms GP and CountGauss for the flow cytometry data when $k$ is
set to $16$. The coefficients tend to be clustered near the diagonal as has been previously observed.} 
\label{fig:flow}
\end{figure}
\begin{figure*}[ht!]
%
    \begin{minipage}[b]{0.6\linewidth}   
    \includegraphics[width=1\linewidth]{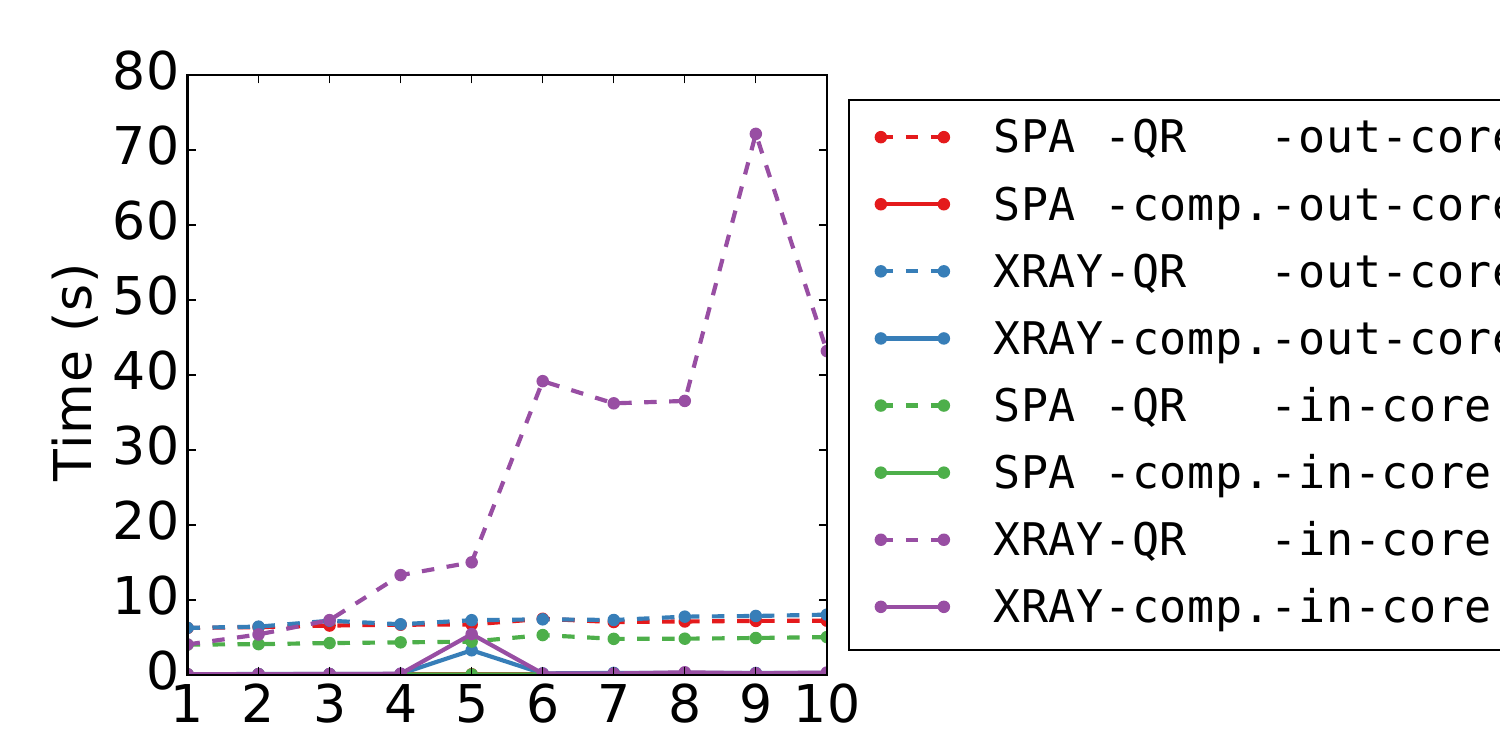}
 (A) \qquad      Running times using CountGauss (comp==CG)
\end{minipage}
\begin{minipage}[b]{0.6\linewidth}   
\includegraphics[width=1\linewidth]{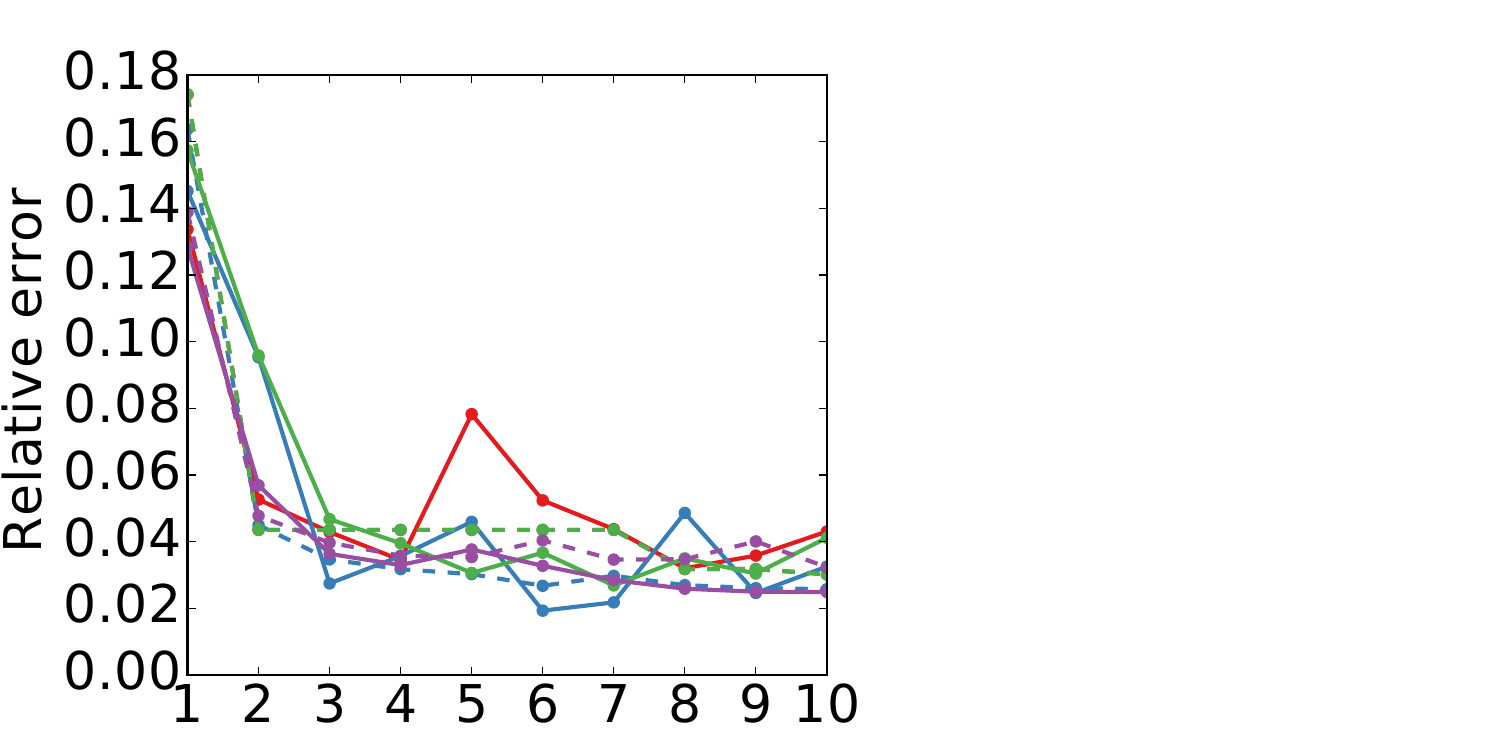}\\
(B)  \qquad     Relative error using CG
\end{minipage}
\caption{Extracting columns with CG versus using QR factorization. Note that the QR-based methods
are optimized for tall-and-skinny matrices and tend to do poorly for fat matrices. Note that CG (and SC) tends to perform
well since it is based on random projections and is at least an order of magnitude faster than QR-based methods.}
\label{fig:climate}
\end{figure*}

\begin{table*}[ht!]
\begin{center}
\begin{tabular}{|l| c|| c | c || c | c|| c | c || c|| r|| }
\hline
      \multicolumn{2}{|c||}{Test}   &     \multicolumn{2}{c||}{Proj}       &     \multicolumn{2}{c||}{SVMf}    &   \multicolumn{2}{c||}{Margin}   &  Proj &  Algo \\
\hline
  mean  &  std &    mean &   std  &  mean &    std  &   mean &    std  &  & \\
\hline                   
 17.92 & 11.29  &  0.0000 &  0.0000 &  0.89 &  0.38  &  2.1057 &  3.9391 &  full & full \\
 24.71 & 12.60  &  0.0086 &  0.0042 &  0.38 &  0.19  &  1.6792 &  3.5714 &   128 & countSketch \\
 25.27 & 13.08  &  0.0216 &  0.0047 &  \textbf{0.16} &  \textbf{0.11}  &  1.6277 &  3.5634 &   128 & countGauss  \\
 25.07 & 13.20  &  0.3676 &  0.1569 &  0.49 &  0.20  &  1.7143 &  3.7143 &   128 &         RG  \\
\hline                    
 17.92 & 11.29  &  0.0000 &  0.0000 &  0.89 &  0.38  &  2.1057 &  3.9391 &  full & full        \\
 22.56 & 12.42  &  0.0082 &  0.0036 &  0.54 &  0.21  &  1.8778 &  3.6709 &   256 & countSketch \\
 24.34 & 12.23  &  0.0565 &  0.0091 &  \textbf{0.21} &  \textbf{0.07}  &  1.8722 &  3.7389 &   256 & countGauss \\
 23.66 & 12.86  &  0.8178 &  0.3286 &  0.98 &  0.35  &  1.8895 &  3.6747 &   256 &         RG \\
\hline                   
 17.92 & 11.29  &  0.0000 &  0.0000 &  0.89 &  0.38  &  2.1057 &  3.9391 &  full & full       \\
 21.31 & 11.92  &  0.0075 &  0.0032 &  0.72 &  0.28  &  1.9914 &  3.7989 &   512 & countSketch\\
 22.11 & 12.89  &  0.1865 &  0.0228 &  \textbf{0.45} &  \textbf{0.11}  &  1.9893 &  3.8453 &   512 & countGauss\\
 22.42 & 12.37  &  1.6057 &  0.6437 &  1.88 &  0.67  &  2.0148 &  3.9014 &   512 &         RG\\
\hline                    
\end{tabular}
\caption{We applied CountGauss (CG), CountSketch (CW) and Random Gaussian (RG) on the TechTC300 dataset consisting of
$295$ pairs of data matrices and show the resulting mean and standard deviation for the resulting parameters such as
projection time, SVMf time (projection + SVM training time), margin (gamma) and testing error. The results are shown
over $10$-fold cross validation with $4$ repetitions and $3$ runs over the random projection matrices. Note that the mean running times
for our algorithm CG (highlighted) is faster than both CW and RG in spite of slower projection time than CW.}
\end{center}
\end{table*}
\section{Experiments}
We show experiments validating our projection operator countGauss (CG) for NMF problems on various synthetic and real-world datasets.
Also, we apply CG on the SVM problem for the TechTC300 datasets. In all of our experiments\footnote{\url{https://github.com/marinkaz/nimfa}}, we set $B=5m$.
\\
{\bf Synthetic datasets.}
%
%
Similar to ~\cite{damle2014random}, we generate the data as follows:
We set a grid of tuples $(k, m)$ such that $m/k \approx \log k$. For each tuple, we generate $500$ separable 
datasets, say X, such that they are of size $1000 \times 500$ and have nonnegative rank $k$. 
Choose matrix $U$ to have i.i.d. samples from the uniform random distribution in $[0,1]$, and be of size $d \times k$.
Also, generate matrix $V$ with the identity matrix for the top $k$ indices and the rest with i.i.d samples from the uniform
distribution. Normalize each row of the matrix $V$ to unit norm and compute the matrix product $X = U\trans{V}$.
From Figure~\ref{fig:exact_recover}, we see that the CountGauss algorithm also requires $O(k\log k)$ optimizations to find all $k$ extreme points with high probability.
We also test the algorithm in the noisy case. For that, we  generate $U$ of size $1000 \times 20$ with uniform entries in $[0,1]$ and
set the first $20$ columns of data matrix $X$ to $U$.
The remaining $190$ columns of $X$ are set to the midpoints of the $k(k+1)/2$-dimensional faces of the polytope with extreme points chosen by
the first $20$ columns of $X$. Now, we add Gaussian noise to $X$ with noise level $\sigma$, creating many spurious
extreme points. The resulting scree plot is shown in Figure~\ref{fig:exact_recover}.
\\
{\bf Flow cytometry.}
The flow cytometry (FC) data represents abundances of fluorescent molecules labeling antibodies
that bind to specific targets on the surface of blood cells. A more detailed description of
the dataset can be found in~\cite{benson2014scalable}. 
The measurements are represented as the data matrix A of size $40000 \times 5$.
Since they study pairwise interactions in the data, the Kronecker product, $X = A\otimes A$ is formed which is of size
$40000^2 \times 5^2$. 


For this dataset, we exploit the data structure as follows. For some arbitrary input vector $g$, we know that
$A \otimes A g = \trans{A} G A$ where $g=\textrm{vec}(G)$. For each random projection,
we can compute the matrix-matrix product $AG$ very efficiently and in fact do not even need to generate the matrix $G$.
For our algorithm, we do not need to
explicitly compute the Kronecker product and the complete NMF problem, including anchor selection and learning the weight coefficients, 
can be solved in a couple of seconds on an off-the-shelf desktop. As we can see from Figure~\ref{fig:flow} the results are pretty consistent from prior work~\cite{benson2014scalable}. 
The weight matrix $H$ still maintains a diagonal-like structure as previously observed.
\\
{\bf Gene expression breast cancer dataset.}
We utilize the hereditary breast cancer dataset collected by Hedenfalk et al. (2001) which consists of 
the expression levels of $3226$ genes on $22$ samples from breast cancer patients. 
The patients consist of three groups: 7 patients with a BRCA1 mutation, 8 samples with a BRCA2
mutation4, and 7 additional patients with sporadic cancers. It was analyzed using separable NMF in~\cite{damle2014random} and we similarly preprocess
the dataset by exponentiating to make the log- expression levels nonnegative and normalize the columns.
The size of the data matrix is $3222 \times 22$.
The result of applying our algorithm CG and GP are shown in Figure~\ref{fig:breast_comp}.
Notice that we get similar reconstruction error as GP while we vary the number of anchors.
\\
{\bf Climate Dataset.} 
We obtained a climate dataset which was analyzed in~\cite{tepper2015compressed}. The data size is $10512 \times 23742$.
First we present the running times and reconstruction error using SC versus QR-based algorithms and then show the corresponding results using the CG algorithm in Figure~\ref{fig:climate}. Note that CG (and SC) which is based on random projections is an order-of-magnitude faster compared to QR factorization methods. 
%
%
\\
{\bf SVM TechTC-300 Dataset.} 
We obtained the TechTC-300 dataset which is a comprehensive directory of the web. There are $295$-pairs of categories, providing a rich framework for running SVM experiments~\cite{paul2014random}. 
Each data matrix has $10,000 - 40,000$ words and $150-280$ documents. LIBSVM was used with a linear kernel and soft-margin parameter $C$ set to $500$ for all experiments and we set
the projections to $128, 256,$ and $512$.
The results are summarized in Table 1.

\section{Proof of Theorem~\ref{thm:main}} \label{sec:countsketch}

The main result of this section is

{\noindent{\bf Theorem~\ref{thm:main}} \em (Restated)
There exists an absolute constant $C>0$  such that for every $\delta\in (0, 1)$, every integer $m\geq 1$ and every matrix $U\in \R^{n\times d}$ with orthonormal columns if $B\geq \frac1{\delta^2}C d^2\cdot m$,  $S\in \mathbb{R}^{B\times n}$ is a random CountSketch matrix, and $G\in \R^{m\times B}$ and $\tilde G\in \R^{m\times n}$ are matrices of i.i.d. unit variance Gaussians, then the total variation distance between the joint distribution $G S U$ and $\tilde GU$ is less than $\delta$.
}

\begin{remark}
Note that we restrict the range of values of $m$ in Theorem~\ref{thm:main} to $[1:n^4]$. This is because if $m>n^4$, the theorem requires $B\gg \frac1{\delta}n^2$, at which  point the CountSketch matrix $S$ becomes an isometry of $\R^n$ with high probability and the theorem follows immediately. At the same time restricting $m$ to be bounded by a small polynomial of $n$ simplifies the proof of Theorem~\ref{thm:main}  notationally.
\end{remark}

Recall that a CountSketch matrix $S\in \mathbb{R}^{B\times n}$ is a matrix all of whose columns have exactly one nonzero in a random location, and the value of the nonzero element is independently chosen to be $-1$ or $+1$. All random choices are made independently. Throughout this section we denote the number of rows in the CountSketch matrix by $B$. Note that the matrix $S$ is a random variable. Let $G$ denote an $m\times B$ matrix of independent Gaussians. For an $n\times d$ matrix $U$ with orthonormal columns let  $q:\mathbb{R}^d\to \mathbb{R}_+$ denote the p.d.f. of the random variable $G_1 S U$, where $G_1$ is the first row of $G$ (all rows have the same distribution and are independent). We note that $G_1SU$ is a mixture of Gaussians. Indeed, for any fixed $S$ the distribution of $G_1SU$ is normal with covariance matrix $M:=(G_1SU)^T(G_1SU)=U^TS^TSU$. We denote the distribution of $G_1SU$ given $S$ by 
$$
q_S(x):=\frac1{\sqrt{(2\pi)^d \det M}}e^{-\frac1{2}x^TM^{-1}x}.
$$
Note that since $S$ is a random variable, $M$ is as well. With this notation in place we have for any $x\in \R^d$
\begin{equation}\label{eq:def-q}
q(x)=\expect_S\left[q_S(x)\right].
\end{equation}
Let $p:\mathbb{R}^d\to \mathbb{R}_+$ denote the pdf of the isotropic Gaussian distribution, i.e., for all $x\in \R^d$ 
\begin{equation}
p(x)=\frac1{\sqrt{(2\pi)^d}}e^{-\frac1{2}x^Tx}.
\end{equation}

We will use the following measures of distance between two distribution in the proof of Theorem~\ref{thm:main}.
\begin{definition}[Kullback-Leibler divergence]
The Kullback-Leibler (KL) divergence between two random variables $P, Q$ with probability density functions $p(x), q(x)\in \R^d$ is given by $D_{KL}(P||Q)=\int_{\R^d} p(x)\ln \frac{p(x)}{q(x)}dx$
\end{definition}

\begin{definition}[Total variation distance]
The total variation distance between two random variables $P, Q$ with probability density functions $p(x), q(x)\in \R^d$ is given by $D_{TV}(P, Q)=\frac1{2}\int_{\R^d} |p(x)-q(x)|dx$.
\end{definition}
\begin{theorem}[Pinsker's inequality]
For any two random variables $P, Q$ with probability density functions $p(x), q(x)\in \R^d$ one has $D_{TV}(P, Q)\leq \sqrt{\frac1{2}D_{KL}(P||Q)}$.
\end{theorem}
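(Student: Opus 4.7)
The plan is to reduce the general statement to a one-dimensional (Bernoulli) inequality via a data-processing argument, and then handle the Bernoulli case by elementary calculus.

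First, I would set up the reduction. Define the set $A = \{x \in \R^d : p(x) \geq q(x)\}$. This set is precisely where the integrand of $D_{TV}$ has a definite sign, and standard manipulation gives $D_{TV}(P,Q) = P(A) - Q(A)$. Write $\alpha := P(A)$, $\beta := Q(A)$, and define the binary KL divergence $d(\alpha \| \beta) := \alpha \ln(\alpha/\beta) + (1-\alpha)\ln((1-\alpha)/(1-\beta))$. The first claim is the data-processing bound $D_{KL}(P\|Q) \geq d(\alpha\|\beta)$. This follows from the log-sum inequality applied on $A$ and on $A^c$ separately: since $t \mapsto t \ln t$ is convex, Jensen's inequality gives $\int_A p \ln(p/q)\,dx \geq \alpha \ln(\alpha/\beta)$ and similarly for $A^c$; summing yields the claim.

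It then suffices to prove the Bernoulli Pinsker inequality $2(\alpha-\beta)^2 \leq d(\alpha\|\beta)$ for all $\alpha,\beta \in (0,1)$, which together with the two previous observations gives $D_{TV}(P,Q)^2 = (\alpha-\beta)^2 \leq \tfrac{1}{2} d(\alpha\|\beta) \leq \tfrac{1}{2} D_{KL}(P\|Q)$, exactly as required. To prove the Bernoulli case, fix $\alpha$ and define $f(\beta) := d(\alpha\|\beta) - 2(\alpha-\beta)^2$. A direct computation gives
\begin{equation*}
f'(\beta) = -\frac{\alpha}{\beta} + \frac{1-\alpha}{1-\beta} + 4(\alpha-\beta) = \frac{\beta-\alpha}{\beta(1-\beta)} + 4(\alpha-\beta) = (\alpha-\beta)\Bigl(4 - \frac{1}{\beta(1-\beta)}\Bigr).
\end{equation*}
Since $\beta(1-\beta) \leq 1/4$ on $(0,1)$, the second factor is non-positive. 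Thus $f'(\beta)$ has the same sign as $\beta - \alpha$, so $f$ is non-increasing on $(0,\alpha]$ and non-decreasing on $[\alpha,1)$. Hence $f$ attains its minimum at $\beta = \alpha$, where $f(\alpha) = 0$, giving $f(\beta) \geq 0$ throughout.

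The main (and only) nontrivial step is the sign analysis of $f'$, which hinges on the algebraic identity combining the two logarithmic terms into $(\beta-\alpha)/(\beta(1-\beta))$ together with the elementary bound $\beta(1-\beta) \leq 1/4$; once this is in hand, the monotonicity argument replaces the more delicate convexity approach (which in fact fails globally). Small technicalities — the case where $P$ is not absolutely continuous with respect to $Q$ (in which case $D_{KL}=\infty$ and the inequality is trivial), the boundary cases $\alpha,\beta \in \{0,1\}$, and equality of $p$ and $q$ on a positive-measure set inside $A$ — are handled by standard continuity/limit arguments and do not affect the core computation.
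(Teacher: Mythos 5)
Your proof is correct, but there is nothing in the paper to compare it against: the paper invokes Pinsker's inequality as a known, black-box tool (stated as a named theorem with no proof) and immediately applies it to bound total variation distance via KL divergence. Your argument is one of the standard textbook proofs --- reduce to the two-point (Bernoulli) case by applying the log-sum inequality on the set $A=\{p\ge q\}$ and its complement (so $D_{KL}(P\|Q)\ge d(\alpha\|\beta)$ while $D_{TV}(P,Q)=\alpha-\beta$), then establish $d(\alpha\|\beta)\ge 2(\alpha-\beta)^2$ by the monotonicity analysis of $f(\beta)=d(\alpha\|\beta)-2(\alpha-\beta)^2$. The derivative computation is right: $f'(\beta)=(\alpha-\beta)\bigl(4-\tfrac{1}{\beta(1-\beta)}\bigr)$, the second factor is nonpositive since $\beta(1-\beta)\le 1/4$, so $f$ decreases on $(0,\alpha]$ and increases on $[\alpha,1)$ with $f(\alpha)=0$. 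The handling of degenerate cases ($p\not\ll q$ giving $D_{KL}=\infty$, boundary values of $\alpha,\beta$) in a closing remark is appropriate. In short, a complete correct proof of a result the paper chose merely to cite.
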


The proof of Theorem~\ref{thm:main} uses the following simple claim. 
\begin{claim}[KL divergence between multivariate Gaussians]\label{cl:kl-basic}
Let $X\sim N(0, I_d)$ and $Y\sim N(0, \Sigma)$. Then $D_{KL}(X || Y)=\frac1{2}\tr(\Sigma^{-1}-I)+\frac1{2}\ln \det \Sigma$.
\end{claim}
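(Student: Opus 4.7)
The plan is to compute the KL divergence directly from its definition by writing down the two density functions and integrating against the density of $X \sim N(0, I_d)$. The densities are $p(x)=(2\pi)^{-d/2}e^{-x^Tx/2}$ for $X$ and $q(x)=(2\pi)^{-d/2}(\det\Sigma)^{-1/2}e^{-x^T\Sigma^{-1}x/2}$ for $Y$, so the log-ratio simplifies to
$$
\ln\frac{p(x)}{q(x)} = \tfrac{1}{2}\ln\det\Sigma \;-\; \tfrac{1}{2}x^Tx \;+\; \tfrac{1}{2}x^T\Sigma^{-1}x,
$$
with the awkward normalization factors canceling cleanly. This reduces the problem to evaluating the expectations of the two quadratic forms under $X \sim N(0,I_d)$.

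Next I would use the standard identity $\expect[X^T A X] = \tr(A)$ for any symmetric matrix $A$ when $X$ has identity covariance and zero mean, which can be seen by writing $X^T A X = \sum_{i,j} A_{ij} X_i X_j$ and taking expectations entry-wise (only diagonal terms survive, giving $\sum_i A_{ii}$). Applied here, this gives $\expect[X^T X] = d$ and $\expect[X^T \Sigma^{-1} X] = \tr(\Sigma^{-1})$. The $\ln\det\Sigma$ term is deterministic and passes through the expectation unchanged.

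Combining the three contributions yields
$$
D_{KL}(X\,\|\,Y) = \tfrac{1}{2}\ln\det\Sigma - \tfrac{d}{2} + \tfrac{1}{2}\tr(\Sigma^{-1}) = \tfrac{1}{2}\tr(\Sigma^{-1} - I) + \tfrac{1}{2}\ln\det\Sigma,
$$
which is exactly the claimed identity. There is no real obstacle here: the computation is a textbook exercise, and the only thing to watch is that $\Sigma$ be positive definite so that $q$ is a valid density and both $\Sigma^{-1}$ and $\ln\det\Sigma$ are well defined (which is implicit in $Y\sim N(0,\Sigma)$ being a nondegenerate Gaussian, and which will hold in our application with high probability over the CountSketch matrix $S$).
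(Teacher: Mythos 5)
Your proof is correct and follows essentially the same route as the paper: expand the log-ratio of the two Gaussian densities and evaluate the resulting quadratic forms using $\expect[X^TAX]=\tr(A)$ for $X\sim N(0,I_d)$. The only cosmetic difference is that the paper combines the two quadratic terms into $\tfrac12 X^T(\Sigma^{-1}-I)X$ before taking the expectation, while you evaluate them separately; both yield the identical final identity.
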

\begin{proof}
One has 
\begin{equation*}
\begin{split}
D_{KL}(X || Y)&=\expect_{X\sim N(0, I_d)}[-\frac1{2}X^TX+\frac1{2}X^T\Sigma^{-1}X+\frac1{2}\ln \det \Sigma]\\
&=\expect_{X\sim N(0, I_d)}[\frac1{2}X^T(\Sigma^{-1}-I)X+\frac1{2}\ln \det \Sigma]\\
&=\frac1{2}\tr(\Sigma^{-1}-I)+\frac1{2}\ln \det \Sigma,\\
\end{split}
\end{equation*}
where we used the fact that for a vector $X$ of independent Gaussians of unit variance one has $\expect_X[X^TAX]=\tr(A)$ for any symmetric $A$ (by rotational invariance of the Gaussian distribution).
\end{proof}

We let $\|A\|$ denote the operator norm of a matrix $A$, 
i.e., $\|A\| = \sup_x \frac{\|Ax\|_2}{\|x\|_2}$.  We use
\begin{claim}\label{cl:1}
For any matrix $M$ with $||I-M||<1/2$ one has $M^{-1}=(I-(I-M))^{-1}=\sum_{k\geq 0} (I-M)^k$.
\end{claim}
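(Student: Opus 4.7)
The plan is to establish this as the standard Neumann (geometric) series identity for matrices with operator norm strictly less than one. Set $N := I - M$, so the hypothesis gives $\|N\| < 1/2 < 1$. The goal is to show that the partial sums $S_K := \sum_{k=0}^{K} N^k$ converge in operator norm to a two-sided inverse of $M = I - N$.

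First I would verify the telescoping identity
\begin{equation*}
(I - N)\,S_K \;=\; \sum_{k=0}^{K} N^k - \sum_{k=1}^{K+1} N^k \;=\; I - N^{K+1},
\end{equation*}
and symmetrically $S_K(I-N) = I - N^{K+1}$, which is immediate from expanding the products. Next I would argue convergence: by submultiplicativity of the operator norm, $\|N^{K+1}\| \leq \|N\|^{K+1} \to 0$ as $K \to \infty$, and moreover the tail sums satisfy $\|S_{K+L} - S_K\| \leq \sum_{k=K+1}^{K+L} \|N\|^k \leq \|N\|^{K+1}/(1 - \|N\|)$, so $\{S_K\}$ is Cauchy in the (complete) space of matrices under the operator norm and therefore converges to some matrix $T := \sum_{k \geq 0} N^k$.

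Finally, taking $K \to \infty$ in the telescoping identity yields $(I - N)T = T(I - N) = I$, i.e. $M \cdot T = T \cdot M = I$. Hence $M$ is invertible and $M^{-1} = T = \sum_{k \geq 0}(I - M)^k$, which is the claim. There is no real obstacle here; the only thing to be slightly careful about is to invoke an operator norm (rather than, say, an entrywise bound) so that submultiplicativity $\|N^k\| \leq \|N\|^k$ is available, and to note that the series converges absolutely in this norm, which justifies manipulating it term by term.
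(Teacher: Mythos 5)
Your proof is correct and complete: it is the standard Neumann-series argument (telescoping partial sums, geometric decay of $\|N^{K+1}\|$ via submultiplicativity, and passage to the limit to obtain a two-sided inverse). The paper states Claim~\ref{cl:1} without proof as a standard Taylor-expansion fact, so your argument simply supplies the canonical justification.
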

\begin{claim}\label{cl:2}
For any matrix $M$ with $||I-M||<1/2$ one has $\log \det M=\log\det (I-(I-M))=\sum_{k\geq 1} -\tr((I-M)^k)/k.$
\end{claim}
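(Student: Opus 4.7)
The plan is to reduce the matrix identity to a one-parameter scalar calculation and apply Jacobi's formula. Set $A := I - M$, so by hypothesis $\|A\| < 1/2$ and $M = I - A$; in particular the spectral radius of $A$ is less than $1/2$, hence $I - tA$ is invertible for every $t \in [0,1]$ and $\log\det(I-tA)$ is well-defined and smooth on this interval.

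Next I would introduce the auxiliary function $f(t) := \log\det(I - tA)$ on $[0,1]$, which satisfies $f(0) = 0$ and $f(1) = \log\det M$. By Jacobi's formula $\frac{d}{dt}\log\det X(t) = \tr(X(t)^{-1} X'(t))$ applied with $X(t) = I - tA$, one has $f'(t) = -\tr\bigl((I - tA)^{-1} A\bigr)$. Since $\|tA\| \leq \|A\| < 1/2$ throughout, Claim~\ref{cl:1} gives the absolutely convergent Neumann expansion $(I - tA)^{-1} = \sum_{k\geq 0} t^k A^k$, so
$$f'(t) = -\sum_{k\geq 0} t^k \tr(A^{k+1}),$$
where the interchange of trace and sum is justified by $|\tr(A^{k+1})| \leq d\,\|A\|^{k+1}$ together with the geometric convergence of $\sum_k \|A\|^{k+1}$.

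To finish, I would integrate term by term from $0$ to $1$, which is legitimate because the partial sums of $\sum_k t^k \tr(A^{k+1})$ converge uniformly on $[0,1]$ (dominated by the geometric series $d\sum_k \|A\|^{k+1}$). This yields
$$f(1) = -\sum_{k\geq 0} \frac{\tr(A^{k+1})}{k+1} = -\sum_{k\geq 1} \frac{\tr((I-M)^k)}{k},$$
which is exactly the claim. The proof has no real obstacles: the only thing that needs care is justifying the series and integral exchanges, and this is straightforward because the hypothesis $\|I-M\| < 1/2$ gives geometric convergence uniformly in $t$ at every step. An alternative route would be to combine the identity $\log\det = \tr\log$ with the matrix-logarithm Taylor series $\log(I - A) = -\sum_{k\geq 1} A^k/k$, but the derivative approach above sidesteps having to separately define and justify the matrix logarithm.
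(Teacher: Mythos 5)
Your argument is correct. Note that the paper does not actually prove Claim~\ref{cl:2}: it is stated (together with Claim~\ref{cl:1}) as a standard Taylor expansion of $\log\det$ around the identity, i.e.\ the Mercator-type series $\log\det(I-A)=\tr\log(I-A)=-\sum_{k\geq 1}\tr(A^k)/k$, which is the alternative route you mention at the end. Your derivation via Jacobi's formula applied to $f(t)=\log\det(I-tA)$, followed by the Neumann expansion of $(I-tA)^{-1}$ and term-by-term integration, is a clean self-contained substitute that indeed avoids having to define the matrix logarithm; all the convergence interchanges you invoke are justified by the geometric domination $|\tr(A^{k+1})|\leq d\,\|A\|^{k+1}$ exactly as you say. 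The only point you pass over slightly quickly is that $\log\det(I-tA)$ is real-valued: invertibility alone is not enough, you also need $\det(I-tA)>0$ for all $t\in[0,1]$. This does hold here, since every eigenvalue of $tA$ has modulus at most $\|A\|<1/2$, so every eigenvalue of $I-tA$ lies in the right half-plane and the (real) determinant, being the product of these eigenvalues in conjugate pairs, is positive; alternatively, $\det(I-tA)$ is continuous, never zero, and equals $1$ at $t=0$. In the paper's application $M=U^TS^TSU$ is positive semidefinite, so this is immediate there, but since the claim is stated for arbitrary $M$ with $\|I-M\|<1/2$ it is worth one sentence.
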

and 
\begin{lemma}\label{lm:frobenius-norm}
For any $U\in \R^{n\times d}$ with orthonormal columns, and $B\geq 1$, if $S$ is a random CountSketch matrix and $M=U^TS^TSU$, then $\expect_S[||M-I||_F^2] \leq O(d^2/B)$.
\end{lemma}
For the proof of Lemma~\ref{lm:frobenius-norm} see, e..g., the proof of Theorem 13 of \cite{woodruff2014sketching} with $\ell = 2$, together with the proof of Theorem 14 there applied with $\delta = 2/(\epsilon^2 B)$.

We now have:

\begin{proofof}{Theorem~\ref{thm:main}}
One has by Lemma~\ref{lm:frobenius-norm}, that for any $U\in \R^{n\times d}$ with orthonormal columns, and $B\geq 1$, if $S$ is a random CountSketch matrix and $M=U^TS^TSU$, then  $\expect_S[||M-I||_F^2]=O(d^2/B)$. By Markov's inequality $\prob_S[||I-M||^2_F>(2/\delta)\cdot O(d^2/B)]<\delta/2$. Let $\E$ denote the event that $||I-M||^2_F \leq (2/\delta)\cdot O(d^2/B)$. We condition on $\E$ in what follows. Since $B\geq \frac1{\delta^3} Cd^2 m$ for a sufficiently large absolute constant $C>1$, we have, conditioned on $\E$, that 
\begin{equation}\label{eq:fnim-bound}
||I-M||_F^2\leq (2/\delta)\cdot O(d^2/B)=(2/\delta)\cdot \delta^3/(Cm)\leq 2\delta^2/(Cm).
\end{equation}
Note that in particular we have $||I-M|| \leq ||I-M||_F < 1/2$ conditioned on $\E$  as long as $C>1$ is larger than an absolute constant. 

By Claim~\ref{cl:kl-basic} we have $D_{KL}(X || Y)=\frac1{2}\tr(I-\Sigma^{-1})+\frac1{2} \ln \det \Sigma$. We now use the Taylor expansions of matrix inverse and $\log \det$ provided by Claim~\ref{cl:1} and Claim~\ref{cl:2} to obtain

\begin{eqnarray}\label{eq:fsggrg}
D_{KL}(X || Y)&=&\frac1{2}\tr(M^{-1}-I)+\frac1{2}\ln \det M \notag \\
&=&\frac1{2}\tr\left(\sum_{k\geq 1} (I-M)^k\right)+\frac1{2}\sum_{k\geq 1} \left(-\tr((I-M)^k)/k\right) \notag \\
&=&\frac1{2}\tr\left(\sum_{k\geq 2} (I-M)^k\right)+\frac1{2}\sum_{k\geq 2} \left(-\tr((I-M)^k)/k\right)\notag \\
&=&O(\tr((I-M)^2))\text{~~~~~~~~~~~~(since $||I-M||_2\leq ||I-M||_F < 1/2$, so we have a geometric series)} \notag \\
&=&O(||I-M||_F^2) \notag \\
&=&O(\delta^2/(Cm))\text{~~~~~~~~~~~(by~\eqref{eq:fnim-bound})}\notag \\
&\leq& (\delta/4)^2/m
\end{eqnarray}
as long as $C>1$ is larger than an absolute constant. This shows that for every $S\in \E$ one has $D_{KL}(p||q_S)\leq (\delta/4)^2/m$, and thus $D_{KL}(p||\tilde q|\E])\leq (\delta/4)^2/m$, where we let $\tilde q(x):=\expect_S[q_S(x)|\E]$.

We now observe that the vectors $(G_i SU)_{i=1}^m$ and $(\tilde G_i U)_{i=1}^m$ are vectors of independent samples from distributions $q(x)$ and $p(x)$ respectively. We denote the corresponding product distributions by $q^m$ and $p^m$. Since the good event $\E$ constructed above occurs with probability at least $1-\delta/2$, it suffices to consider the distributions $\tilde q(x)$ and $p(x)$, as 
\begin{equation*}
D_{TV}(q^m, p^m)\leq \prob[\bar \E]+D_{TV}(q^m, p^m|\E)=\prob[\bar \E]+D_{TV}(\tilde q^m, p^m),
\end{equation*}
where $D_{TV}(q^m, p^m| \E)=D_{TV}(\tilde q^m, p^m)$ stands for the total variation distance between the distribution of $(\tilde G_i U)_{i=1}^m$ and the distribution of $(G_i SU)_{i=1}^m$ conditioned on $S\in \E$. 
We can now use the estimate from~\eqref{eq:fsggrg} to get
\begin{equation*}
\begin{split}
D_{TV}(\tilde q^m, p^m)&\leq \sqrt{\frac1{2}D_{KL}(p^m || \tilde q^m)}\text{~~~~(by Pinsker's inequality)}\\
&=\sqrt{\frac{m}{2}D_{KL}(p|| \tilde q)}\text{~~~~(by additivity of KL divergence over product spaces)}\\
&\leq \sqrt{\frac{m}{2}\cdot (\delta/4)^2 /m}\text{~~~~(by ~\eqref{eq:fsggrg})}\\
&\leq \delta/4.
\end{split}
\end{equation*}

\end{proofof}

We now prove (Lemma~\ref{lm:lb} below) that the result above is essentially tight.  We will need
\begin{theorem}[Example 2.2 in~\cite{wainwright_2019}, page 29]\label{thm:chi2}
Let $Y_1,\ldots, Y_m\sim N(0, 1)$ be independent Gaussian random variables, and let $Z=\sum_{i=1}^m Y_i^2$. Then for every $t\in (0, 1)$ one has 
$\prob[|Z-\expect[Z]|\geq t \cdot m]\leq 2 e^{-t^2m/8}$.
\end{theorem}

\begin{lemma}\label{lm:lb}
There exists a constant $C>1$ such that for sufficiently large $m$,  $n$ and $d$  there exists a matrix $U\in \R^{n\times d}$ with orthonormal columns such that for $B\leq  d^2m /(C\log d)$ the following conditions hold if $m=\omega(\log d)$. If $S\in \R^{B\times n}$ is a CountSketch matrix, $G\in \R^{m\times n}$ and $\tilde G\in \R^{B\times m}$ are matrices of unit variance Gaussians, then $D_{TV}(GU, \tilde G SU)=\Omega(1)$.
\end{lemma}
\begin{proof}
Define $U\in \R^{n\times d}$ as follows. For every $i=1,\ldots, d$ the $i$-th column of $U$ contains $L=\lceil \sqrt{B}/d\rceil$ nonzero coordinates, with each nonzero entry equal to $\frac1{\sqrt{L}}$, in rows with indices in $\{L\cdot (i-1)+1, L \cdot (i-1)+2, \ldots, L\cdot i\}$. We assume that $B\geq d^2$ first, and handle the case $B\leq d^2$ later. Note that the columns of the matrix $U$ defined above are indeed orthonormal, as required.

We will show that Euclidean norms
$$
\left(||G U(e_i+e_j)||_2^2\right)_{i, j=1}^d
$$ follow a distribution that is further than a constant in total variation distance from 
$$
\left(||\tilde G S U(e_i+e_j)||_2^2\right)_{i, j=1}^d. 
$$

Recall that the CountSketch matrix contains exactly one nonzero in every column: for every $k=1,\ldots, n$ the nonzero in the $k$-th column is in position $h(k)$, and the value is $\sigma(k)$.  
For a pair of indices $a, b\in \{1, 2,\ldots, Ld\}$ we write $a\sim b$ if they belong to the support of two {\bf distinct} columns. Note that for every $a\in \{1, 2,\ldots, Ld\}$ one has $a\sim b$ if and only if $\lceil a/L\rceil \neq \lceil b/L\rceil$, and therefore there are $L^2 {d \choose 2}$ pairs $a, b$ such that $a\sim b$.
Define
$$
\mathcal{E}=\{(h, \sigma) : h(a)=h(b)\text{~for exactly one pair~}a\neq b, a, b\in \{1,2,\dots, L d\}\text{~such that~}a\sim b\}.
$$
In other words, the event $\mathcal{E}$ is the event (over the choice of $h$ and $\sigma$)  that there is {\bf a unique collision} between two coordinates out of the first $L d$ that belong to supports of two distinct columns.  Since the columns of our matrix $U$ contain disjoint blocks of $L$ coordinates each, for a total of $L$ coordinates, we have that if the event $\mathcal{E}$ happens, then the CountSketch matrix does not preserve the Euclidean length of the sum of the columns in whose blocks the collision happened particularly well, which can then be detected by computing Euclidean lengths of the sums of the corresponding columns of $\tilde G S U$. We formalize this below.

One has 
\begin{equation}
\begin{split}
\prob[\mathcal{E}]&=\sum_{a \sim b} \prob[h(a)=h(b)\text{~is the only collision}]\\
&=\sum_{a \sim b} \frac1{B} \prod_{i=3}^{Ld}\left(1-\frac{i-2}{B}\right)\\
&\geq \sum_{a \sim b} \frac1{B} \left(1-\frac{Ld}{B}\right)^{Ld}\\
&\geq \frac{L^2\cdot d(d-1)}{2B}\cdot \left(1-\frac{Ld}{B}\right)^{Ld}\\
&\geq \frac1{2}e^{-1-o(1)},
\end{split}
\end{equation}
where  we used the fact that $L=\lceil \sqrt{B}/d\rceil\geq \sqrt{B}/d$, and therefore $B\leq L^2d^2$, as well as the fact that $1-1/d=1-o(1)$. In the last transition above we used the fact that $e^{-x-x^2}\leq 1-x$ for $x\in [0, 1/2]$ and that $Ld/B\leq (\sqrt{B}/d+1)d/B\leq 1/\sqrt{B}+d/B\leq 1/2$ since $B\geq d^2$ and $d$ is assumed to be larger than an absolute constant. This implies that
$$
\left(1-\frac{Ld}{B}\right)^{Ld}\geq \left(\exp\left(-\frac{Ld}{B}-\left(\frac{Ld}{B}\right)^2\right)\right)^{Ld}=\exp\left(-1-\frac{Ld}{B}\right)=e^{-1-o(1)},
$$
where in the last transition we used the fact that $\frac{Ld}{B}\leq \frac{(\sqrt{B}/d+1)d}{B}\leq \frac1{\sqrt{B}}+\frac{1}{d}=o(1)$.

  Now condition on the event $\mathcal{E}$, let $a, b$ denote the colliding pair, and let $i^*=\lceil a/L\rceil, j^*=\lceil b/L\rceil$ denote the columns that have a nonzero entry in position $a$ and $b$ respectively.  We now show that if we condition on $\mathcal{E}$, then  $||SU (e_{i^*}+e_{j^*})||_2^2$ deviates from its expectation, namely $||U(e_{i^*}+e_{j^*})||_2^2=||e_{i^*}+e_{j^*}||_2^2=2$, quite significantly. Note that 
\begin{equation*}
\begin{split}
||SU(e_{i^*}+e_{j^*})||_2^2&=\sum_{b=1}^B \left(\sum_{r\in h^{-1}(b)} \sigma_r (U_{ri^*}+U_{rj^*})\right)^2\\
&=2(L-1)\frac1{L}+\frac1{L} (\sigma_a+\sigma_b)^2\\
&=2-\frac{2}{L}+\frac1{L} (\sigma_a+\sigma_b)^2\\
&=2+\frac{2}{L} \sigma_a\sigma_b.\\
\end{split}
\end{equation*}

We thus have, 
\begin{equation}\label{eq:norm-difference}
\begin{split}
\left|||SU(e_{i^*}+e_{j^*})||_2^2-||U(e_{i^*}+e_{j^*})||_2^2\right|&=2/L\\
\end{split}
\end{equation}
We now show that this difference can be detected with high probability by observing $GU$ and $\tilde GSU$.
To that effect note that by $2$-stability of the Gaussian distribution
\begin{equation}\label{eq:sueij-norm}
||\tilde GSU(e_i+e_j)||_2^2\sim ||SU(e_i+e_j)||_2^2 \cdot \sum_{r=1}^m Y_r^2
\end{equation}
and 
\begin{equation}\label{eq:ueij-norm}
||GU(e_i+e_j)||_2^2\sim ||U(e_i+e_j)||_2^2 \cdot \sum_{r=1}^m Y_r^2,
\end{equation}
where $Y_r \sim N(0, 1)$ are independent unit variance Gaussians. This in particular implies that with high probability for all pairs $i, j\in \{1,2,\ldots, Ld\}, i\neq j,$ simultaneously the squared norms of $GU(e_i+e_j)$ and $\tilde G S U (e_i+e_j)$ are quite concentrated around their expectations. Putting this together with the bound above, we will get that the two distributions can be distinguished conditioned on $\mathcal{E}$. Indeed, by Theorem~\ref{thm:chi2} (concentration for sums of squares of independent normal random variables, i.e., $\chi^2$ random variables) we have
\begin{equation}\label{eq:concentration}
\prob\left[\left|\sum_{r=1}^m Y_r^2-m\right|\geq 5\sqrt{m\log d}\right]\leq \frac1{d^3}.
\end{equation}

Combining~\eqref{eq:concentration} with~\eqref{eq:ueij-norm}, ~\eqref{eq:sueij-norm} and~\eqref{eq:norm-difference}, we thus get 
\begin{equation}
\begin{split}
\left|||\tilde GSU(e_i+e_j)||_2^2-m\cdot ||SU(e_i+e_j)||_2^2\right|&= ||SU(e_i+e_j)||_2^2\cdot \left|\sum_{r=1}^m Y_r^2-m\right|\\
&\leq ||SU(e_i+e_j)||_2^2\cdot 5\sqrt{m\log d}\\
&\leq 20\sqrt{m\log d},
\end{split}
\end{equation}
since conditioned on $\mathcal{E}$ one has by~\eqref{eq:norm-difference} that $||SU(e_i+e_j)||_2^2\leq 2+2/L\leq 4$. At the same time we have
\begin{equation*}
\begin{split}
\left|||GU(e_i+e_j)||_2^2-m\cdot ||U(e_i+e_j)||_2^2\right|&= ||U(e_i+e_j)||_2^2\cdot \left|\sum_{r=1}^m Y_r^2-m\right|\\
&\leq 10\sqrt{m\log d}\\
\end{split}
\end{equation*}
with probability at least $1-1/d^3$ for every $i, j\in \{1, 2,\ldots, d\}$. We thus have with probability at least $1-1/d$ for all pairs of distinct indices $i, j\in \{1, 2,\ldots, d\}$ simultaneously
\begin{equation}\label{eq:238gt8wuguweg}
\left|||GU(e_i+e_j)||_2^2-2m\right| \leq 10\sqrt{m\log d}.
\end{equation}
On the other hand, conditioned on $\mathcal{E}$ there exists a pair of distinct indices $i^*, j^*\in \{1,2,\ldots, d\}$ such that 
$$
\left|||\tilde GSU(e_{i^*}+e_{j^*})||_2^2-m\cdot \left(2+\frac{2}{L}\sigma_{a} \sigma_b\right)\right|\leq 20\sqrt{m\log d}.
$$
Thus, conditioned on $\mathcal{E}$ one has, noting that $|\sigma_a\sigma_b|=1$, that
\begin{equation*}
\begin{split}
\left|||\tilde GSU(e_{i^*}+e_{j^*})||_2^2-2m\right|&\geq m\cdot \frac{2}{L}-20\sqrt{m\log d}\\
\end{split}
\end{equation*}

Since we have $L=\sqrt{B/d^2}\leq \sqrt{\frac{m}{C\log d}}$, and the rhs of the equation above can be lower bounded by
\begin{equation*}
\begin{split}
m\cdot \frac{2}{L}-20\sqrt{m\log d}&\geq 2\sqrt{Cm\log d}-20\sqrt{m\log d}\\
&\geq 40\sqrt{m\log d},
\end{split}
\end{equation*}
as long as $C>0$ is sufficiently large.  Thus, conditioned on $\E$, we have 
\begin{equation}\label{eq:238gt8}
\begin{split}
\left|||\tilde GSU(e_{i^*}+e_{j^*})||_2^2-2m\right|&\geq 40\sqrt{m\log d}
\end{split}
\end{equation}
with probability at least $1-1/d^3$.

Combining~\eqref{eq:238gt8wuguweg} with~\eqref{eq:238gt8}, we get that it is possible to distinguish between the two distributions with probability $1-o(1)$ conditioned on $\mathcal{E}$, which happens with probability $\frac1{2}e^{-1}(1-o(1))$, and hence the bound of the lemma follows under the assumption that $B\geq d^2$. If this assumption is not satisfied, then by a calculation similar to the above if $U$ is the $d\times d$ identity matrix,  the matrix $SU$ has at least one pair of columns that are identical up to a sign flip with probability at least 
\begin{equation*}
\begin{split}
1-\prod_{i=1}^d  \left(1-\frac{i-1}{B}\right)&\geq 1-\left(1-\frac{d}{B}\right)^d\\
&\geq 1-\left(1-\frac{1}{d}\right)^d \\
&\geq 1-e^{-1}.
\end{split}
\end{equation*}
Conditioned on this event, for every $m\geq 1$ the matrix $\tilde G S U$ has at least two columns that are identical up to a sign flip, something that happens with probability zero for $G  U$. 
\end{proof}

\if 0
\begin{proof}
We start by noting that for every $i, j\in [1:d]$  the matrix $M=U^TS^TSU$ satisfies 
\begin{equation*}
\begin{split}
M_{ij}&=\sum_{r=1}^B \sum_{a=1}^n\sum_{b=1}^n S_{r, a}U_{a, i}  S_{r, b} U_{b, j} \\
&= \sum_{a=1}^n U_{a, i}  U_{a, j} \left(\sum_{r=1}^BS_{r, a}^2\right) +\sum_{r=1}^B\sum_{a=1}^n\sum_{b=1, b\neq a}^n S_{r, a}U_{a, i}  S_{r, b} U_{b, j}\\
&= \delta_{i, j} +\sum_{r=1}^B\sum_{\substack{a, b=1,\\a\neq b}}^n S_{r, a}U_{a, i}  S_{r, b} U_{b, j},\\
\end{split}
\end{equation*}
where $\delta_{i, j}$ equals $1$ if $i=j$ and equals $0$ otherwise. We thus have, for every $i, j\in [1:d]$, that 
\begin{equation*}
\begin{split}
(M-I)_{ij}&=\sum_{r=1}^B\sum_{\substack{a, b=1,\\a\neq b}}^n S_{r, a}U_{a, i}  S_{r, b} U_{b, j},\\
\end{split}
\end{equation*}
which in particular means that 
\begin{equation}\label{eq:eq:3}
\begin{split}
\tr(I-M)=-\sum_{i}(M-I)_{ii}&=-\sum_i\sum_{r=1}^B\sum_{\substack{a, b=1,\\a\neq b}}^n S_{r, a}U_{a, i}  S_{r, b} U_{b, i},\\
&=-\sum_{r=1}^B\sum_{\substack{a, b=1,\\a\neq b}}^n S_{r, a} S_{r, b} \cdot U_a U_b^T,\\
\end{split}
\end{equation}
(note that it immediately follows that $\expect_S[\tr(I-M)]=0$, as $\expect_S[S_{r, a} S_{r, b}]=0$ for $a\neq b$).

We also have
\begin{equation*}
\begin{split}
(M-I)^2_{ij}&=\sum_{r=1}^B\sum_{\substack{a, b=1,\\a\neq b}}^n \sum_{r'=1}^B\sum_{\substack{c, d=1,\\c\neq d}}^n S_{r, a}U_{a, i}  S_{r, b} U_{b, j} S_{r', c}U_{c, i}  S_{r', d} U_{d, j}\\
\end{split}
\end{equation*}
and hence 
\begin{equation}\label{eq:eq:1}
\begin{split}
||I-M||_F^2&=\sum_{ij} (M-I)^2_{ij}=\sum_{ij} \sum_{r=1}^B \sum_{\substack{a, b=1,\\a\neq b}}^n \sum_{r'=1}^B\sum_{\substack{c, d=1,\\c\neq d}}^n S_{r, a}U_{a, i}  S_{r, b} U_{b, j} S_{r', c}U_{c, i}  S_{r', d} U_{d, j}\\
&=\sum_{r=1}^B \sum_{\substack{a, b=1,\\a\neq b}}^n \sum_{r'=1}^B\sum_{\substack{c, d=1,\\c\neq d}}^n S_{r, a} S_{r, b} S_{r', c} S_{r', d} (\sum_i U_{a, i}  U_{c, i}) (\sum_j U_{b, j}   U_{d, j})\\
&=\sum_{r=1}^B \sum_{\substack{a, b=1,\\a\neq b}}^n \sum_{r'=1}^B\sum_{\substack{c, d=1,\\c\neq d}}^n S_{r, a} S_{r, b} S_{r', c} S_{r', d} \cdot U_{a}U_{c}^T \cdot U_{b}  U_{d}^T\\
&=\sum_{r_1=1}^B \sum_{\substack{a_1, b_1=1,\\a_1\neq b_1}}^n \sum_{r_2=1}^B\sum_{\substack{a_2, b_2=1,\\a_2\neq b_2}}^n S_{r_1, a_1} S_{r_1, b_1} S_{r_2, a_2} S_{r_2, b_2} \cdot U_{a_1}U_{a_2}^T \cdot U_{b_1}  U_{b_2}^T\\
\end{split}
\end{equation}

We first note that for for any $r_1, r_2$ and $a_1\neq b_1, a_2\neq b_2$ the quantity 
$$
\expect_S[S_{r_1, a_1}  S_{r_1, b_1} S_{r_2, a_2}    S_{r_2, b_2}]
$$
is only nonzero when $r_1=r_2$ and $\{a_1, b_1, a_2, b_2\}$ contains two distinct elements, each with multiplicity $2$  (let ${\bf I}_*(\{a_q, b_q\}_{q=1}^2)$ denote the indicator of the latter condition). In that case one has $\expect_S[S_{r_1, a_1}  S_{r_1, b_1} S_{r_2, a_2}    S_{r_2, b_2}]=1/B^2$. 
\begin{equation*}
\begin{split}
&\sum_{r_1=1}^B\sum_{r_2=1}^B\sum_{\substack{a_1, b_1=1,\\a_1\neq b_1}}^n \sum_{\substack{a_2, b_2=1,\\ a_2\neq b_2}}^n \expect_S[S_{r_1, a_1}  S_{r_1, b_1} S_{r_2, a_2} S_{r_2, b_2}]\\
&\cdot  (U_{a_1}U_{a_2}^T)^{A}(U_{b_1}U_{b_2}^T)^{B}\cdot ((U x)_{a_1} (Ux)_{a_2})^{C} ((U x)_{b_1} (Ux)_{b_2})^{D}\cdot ((U x)_{a_1} (Ux)_{b_1})^{E} (U_{a_1} U_{b_1}^T)^{F}\cdot ((U x)_{a_2} (Ux)_{b_2})^{G} (U_{a_2} U_{b_2}^T)^{H},
\end{split}
\end{equation*}
where $A,B, C, D, E, F, G, H\in \{0, 1\}$ and $A+B+C+D+E+F+G+H=2$.  We thus have

\begin{equation*}
\begin{split}
&|\sum_{r_1=1}^B\sum_{r_2=1}^B\sum_{\substack{a_1, b_1=1,\\a_1\neq b_1}}^n \sum_{\substack{a_2, b_2=1,\\ a_2\neq b_2}}^n \expect_S[S_{r_1, a_1}  S_{r_1, b_1} S_{r_2, a_2} S_{r_2, b_2}]\cdot\\
&\cdot  (U_{a_1}U_{a_2}^T)^{A}(U_{b_1}U_{b_2}^T)^{B}\cdot ((U x)_{a_1} (Ux)_{a_2})^{C} ((U x)_{b_1} (Ux)_{b_2})^{D}\cdot ((U x)_{a_1} (Ux)_{b_1})^{E} (U_{a_1} U_{b_1}^T)^{F}\cdot ((U x)_{a_2} (Ux)_{b_2})^{G} (U_{a_2} U_{b_2}^T)^{H}]|\\
&\leq \frac1{B}\sum_{\substack{a_1, b_1=1,\\a_1\neq b_1}}^n \sum_{\substack{a_2, b_2=1,\\ a_2\neq b_2}}^n {\bf I}_*(\{a_q, b_q\}_{q=1}^2) |U_{a_1}U_{a_2}^T|^{A} |U_{b_1}U_{b_2}^T|^{B}\cdot |(U x)_{a_1} (Ux)_{a_2}|^{C} |(U x)_{b_1} (Ux)_{b_2}|^{D}\cdot \\
&\cdot |(U x)_{a_1} (Ux)_{b_1}|^{E} |U_{a_1} U_{b_1}^T|^{F}\cdot |(U x)_{a_2} (Ux)_{b_2}|^{G} |U_{a_2} U_{b_2}^T|^{H}.\\
\end{split}
\end{equation*}
We have $|U_{a}U_{b}^T|\leq ||U_{a}||_2 \cdot ||U_b||_2$ by Cauchy-Schwarz, and $|(U x)_{a}|\leq ||U_{a}||_2\cdot O(\sqrt{\log n})$ since $x\in \T^*$ by assumption of the lemma, so
\begin{equation*}
\begin{split}
&\frac1{B}\sum_{\substack{a_1, b_1=1,\\a_1\neq b_1}}^n \sum_{\substack{a_2, b_2=1,\\ a_2\neq b_2}}^n {\bf I}_*(\{a_q, b_q\}_{q=1}^2)|U_{a_1}U_{a_2}^T|^{A} |U_{b_1}U_{b_2}^T|^{B}\cdot |(U x)_{a_1} (Ux)_{a_2}|^{C} |(U x)_{b_1} (Ux)_{b_2}|^{D}\cdot |(U x)_{a_1} (Ux)_{b_1}|^{E} |U_{a_2} U_{b_2}^T|^{F}\\
&\leq (O(\log n))^{C+D+E+G}\frac1{B}\sum_{\substack{a_1, b_1=1,\\a_1\neq b_1}}^n \sum_{\substack{a_2, b_2=1,\\ a_2\neq b_2}}^n {\bf I}_*(\{a_q, b_q\}_{q=1}^2)(||U_{a_1}||_2 ||U_{a_2}||_2)^{A} \cdot (||U_{b_1}||_2 ||U_{b_2}||_2)^{B}\cdot (||U_{a_1}||_2 ||U_{a_2}||_2)^{C}\\
&\cdot (||U_{b_1}||_2||U_{b_2}||_2)^{D}\cdot (||U_{a_1}||_2 ||U||_{b_1}||_2)^{E} (||U_{a_1}||_2 ||U_{b_1}||_2)^{F}\cdot (||U_{a_2}||_2||U||_{b_2}||_2)^{G} (||U_{a_2}||_2 ||U_{b_2}||_2)^{H}.\\
\end{split}
\end{equation*}

Since we are only summing over $\{a_1, a_2, b_1, b_2\}$ that contain two distinct elements, we have
\begin{equation*}
\begin{split}
&(O(\log n))^{C+D+E+G}\frac1{B}\sum_{\substack{a_1, b_1=1,\\a_1\neq b_1}}^n \sum_{\substack{a_2, b_2=1,\\ a_2\neq b_2}}^n {\bf I}_*(\{a_q, b_q\}_{q=1}^2)(||U_{a_1}||_2 ||U_{a_2}||_2)^{A} \cdot (||U_{b_1}||_2 ||U_{b_2}||_2)^{B}\cdot (||U_{a_1}||_2 ||U_{a_2}||_2)^{C}\\
&\cdot (||U_{b_1}||_2||U_{b_2}||_2)^{D}\cdot (||U_{a_1}||_2 ||U||_{b_1}||_2)^{E} (||U_{a_1}||_2 ||U_{b_1}||_2)^{F}\cdot (||U_{a_2}||_2||U||_{b_2}||_2)^{G} (||U_{a_2}||_2 ||U_{b_2}||_2)^{H}\\
&\leq (O(\log n))^{C+D+E+G}\frac{1}{B}\sum_{\substack{a_1, b_1=1}}^n ||U_{a_1}||^2_2 ||U_{a_2}||^2_2\\
&\leq (O(\log n))^{C+D+E+G}\frac{1}{B}(\sum_{\substack{a_1=1}}^n ||U_{a_1}||^2_2)^2\\
&\leq (O(\log n))^{C+D+E+G}\frac{d^2}{B},
\end{split}
\end{equation*}
where we used the fact that $\sum_a ||U_a|_2^2=d$. Noting that $C+D+E+G=0$ for $\expect_S[||I-M||_F^2]$ and $C+D+E+G=1$ for $\expect_S[x^T(I-M)x \tr(I-M)]$ completes the proof.
\end{proof}
\fi

\section{A counterexample for Fast Hadamard Transforms}\label{sec:counterexample}
\begin{figure}[h!]
    \begin{center}
    \includegraphics[width=0.7\linewidth]{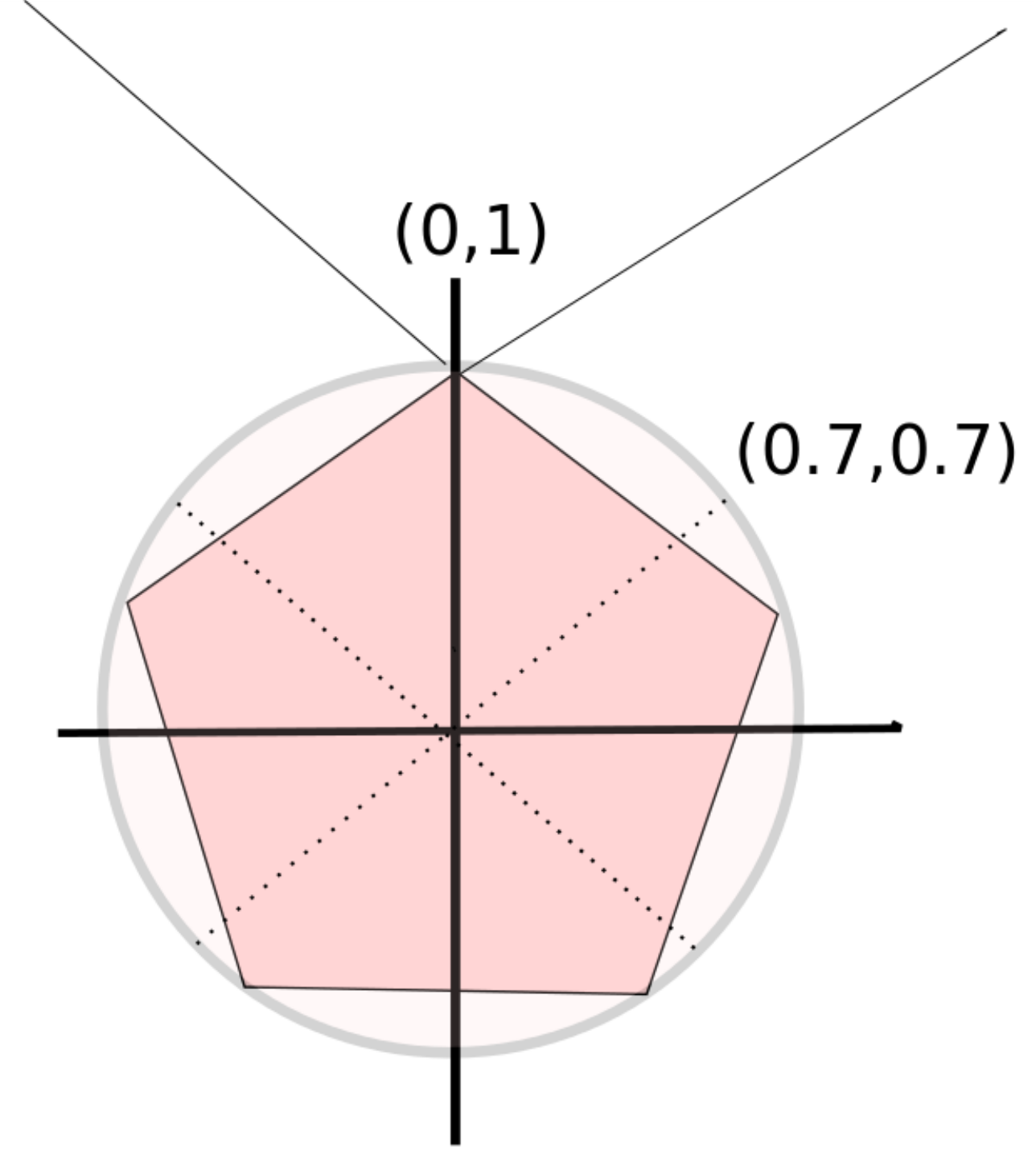}
    \end{center}
    \caption{Illustrative example with $5$ points corresponding to the vertices of a pentagon inscribed in a unit circle.
    The normal cone at $(0,1)$ is not covered by any of the vectors in $\{-1\sqrt{2},+1\sqrt{2}\}^2$.}
    \label{fig:illus_anchor}
\end{figure}    
A natural alternative
transform to try would be the $m \times d$ Subsampled Randomized Hadamard Transform
(SRHT) (see the references in Theorem 7 of \cite{woodruff2014sketching}), 
which has the form $V = P \cdot H \cdot D$, where
$P$ is a diagonal matrix with a random subset of $m$ diagonal entries
equal to $1$, and the remaining equal to $0$, $H$ is the Hadamard transform,
and $D$ is a diagonal matrix with random signs along the diagonal. Like
FastFood, the SRHT
can be applied to a $d$-dimensional vector in $O(d \log d)$ time. Note
that each row of $V$ is in the set $\{-1/\sqrt{d}, +1/\sqrt{d}\}$. 

An illustrative counterexample would be to consider a pentagon inscribed in a unit circle~\ref{fig:illus_anchor} with one
point $p$ at $(0,1)$. Each extreme point then receives $1/5$ of the circumference of the enclosing circle and so to be in 
the normal cone at $p$, one needs to have an angle in $[3\pi/10,7\pi/10]$. Hence, the second coordinate ("y") needs to have
magnitude at least $\sin(3\pi/10)$ which is larger than $1\sqrt{2}$ and so a vector in $\{-1\sqrt{2},+1\sqrt{2}\}^2$ will 
never be in it.
Generalizing this to $d$-dimensions, we could consider a convex set $C$ entirely supported on the first $2$ coordinates 
(so $0$ on the remaining coordinates). Further, we have that $C$ is a pentagon with one extreme point equal
to $(0,1,\ldots, 0)$. Now we require the second coordinate to have magnitude at least $\sin(3\pi/10)$ which is larger
than $1\sqrt{d}$ and therefore a vector in $\{-1\sqrt{d}, +1\sqrt{d}\}^d$ will never be in it.
With probability $\approx 0.16$, a random point on
the sphere will have $x_1 > 1/\sqrt{d}$ (this corresponds to one standard deviation
of an $N(0, 1/d)$ random variable), which means $\omega(N_C(p)) \approx 0.16$,
yet {\it no row of $V$} will be in $N_C(p)$, which means that even
if the condition number $\kappa$ is constant, an algorithm using the SRHT in place
of the FastFood transform will fail with probability $1$. 

\section{SVM with Random Projections }\label{sec:svm-theorem}

We require the following stronger theorem for the SVM problem~\cite{paul2014random}. \\
\begin{theorem}{Let $\epsilon \in (0, \frac{1}{2}]$ be an accuracy parameter and let $R \in \mathbb{R}^{d \times r}$ be a matrix satisfying
$\|V^T V - V^T R R^T V\|_2 \le \epsilon$
	where $V \in \mathbb{R}^{d\times \rho}$ is the orthonormal (columns) matrix of right singular vectors obtained from the SVD of $X$.
Let $\gamma^{*}$ and $\tilde{\gamma}^*$ be the 
margins obtained by solving the SVM problems using data matrices $X$ and $X R$ respectively. Then \\
\begin{center}
	$ (1-2\epsilon) \gamma^{*2} \le \tilde{\gamma}^{*2} \le ( 1 + 2\epsilon) \gamma^{*2}$
\end{center}}
\end{theorem}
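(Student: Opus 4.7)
The plan is to run the argument through the SVM dual, converting the hypothesis $\|V^TV - V^TRR^TV\|_2\le\epsilon$ into a two-sided multiplicative approximation of the dual quadratic form, and then passing back to the margin via strong duality.

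First I would observe that since $V$ has orthonormal columns, $V^TV=I_\rho$, so the hypothesis is just $\|I-V^TRR^TV\|_2\le \epsilon$. Using the SVD $X=U\Sigma V^T$, one has $XX^T=U\Sigma^2 U^T$ and $XRR^TX^T=U\Sigma V^TRR^TV\Sigma U^T$. For any $\alpha\in\R^N$, set $z:=\Sigma U^T Y\alpha$; then $\alpha^T YXX^TY\alpha=\|z\|_2^2$ while $\alpha^T YXRR^TX^TY\alpha=z^T V^TRR^TV z$, so the spectral hypothesis immediately yields the pointwise bound
\[
(1-\epsilon)\,\alpha^T YXX^TY\alpha \;\le\; \alpha^T YXRR^TX^TY\alpha \;\le\; (1+\epsilon)\,\alpha^T YXX^TY\alpha \quad\text{for all }\alpha.
\]

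Second I would convert this into a comparison of the two SVM dual objectives $f(\alpha)=\mathbf{1}^T\alpha-\tfrac12\alpha^T YXX^TY\alpha$ and $\tilde f(\alpha)=\mathbf{1}^T\alpha-\tfrac12\alpha^T YXRR^TX^TY\alpha$, which share the same feasible set. Let $\alpha^\star,\tilde\alpha^\star$ be their maximizers. By strong duality, $f(\alpha^\star)=\tfrac12\|w^\star\|^2=1/(2\gamma^{\star 2})$ and $\tilde f(\tilde\alpha^\star)=\tfrac12\|\tilde w^\star\|^2=1/(2\tilde\gamma^{\star 2})$, and the KKT conditions give $\alpha^{\star T}YXX^TY\alpha^\star=\mathbf{1}^T\alpha^\star=2f(\alpha^\star)$ (and analogously for $\tilde\alpha^\star$ with the kernel $YXRR^TX^TY$). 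Evaluating $\tilde f$ at the suboptimal point $\alpha^\star$ and invoking the upper half of the pointwise inequality,
\[
\tilde f(\tilde\alpha^\star)\;\ge\;\tilde f(\alpha^\star)\;\ge\;\mathbf{1}^T\alpha^\star-\tfrac{1+\epsilon}{2}\alpha^{\star T}YXX^TY\alpha^\star\;=\;(1-\epsilon)f(\alpha^\star),
\]
which translates to $\tilde\gamma^{\star 2}\le\gamma^{\star 2}/(1-\epsilon)$. A symmetric computation — evaluating $f$ at $\tilde\alpha^\star$ and using the lower half of the pointwise bound together with $\mathbf{1}^T\tilde\alpha^\star=2\tilde f(\tilde\alpha^\star)$ — gives $\tilde\gamma^{\star 2}\ge(1-\epsilon)\gamma^{\star 2}$. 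After absorbing the $1/(1-\epsilon)$ versus $1+\epsilon$ slack using $\epsilon\le 1/2$ (equivalently, applying the hypothesis with $\epsilon/2$), the stated two-sided inequality drops out.

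The hard part will be carrying the argument through the soft-margin regime: when some $\alpha_i^\star$ hit the box constraint $\alpha_i\le C$ the KKT identity $\mathbf{1}^T\alpha^\star=\alpha^{\star T}YXX^TY\alpha^\star$ picks up a hinge-loss correction, so one must verify that the multiplicative approximation propagates through the full dual rather than through this clean identity. The paper's "strengthening" is exactly the observation that the two-sided quadratic-form bound forbids the projected objective from collapsing via the trivial $\alpha=0$; this is what upgrades the one-sided guarantee of \cite{paul2014random} into a genuine margin-preservation statement, and it requires no new technology beyond the spectral bound derived in the first paragraph.
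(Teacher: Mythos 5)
Your proof is correct and follows essentially the same route as the paper's: both convert the spectral hypothesis into a relative-error bound on the dual quadratic form, evaluate each objective at the other problem's optimizer, and close the argument via the strong-duality/KKT identities $Z_{\mathrm{opt}}=\tfrac12\|w^*\|_2^2=\tfrac1{2\gamma^{*2}}$, with your ``lower half of the pointwise bound'' playing exactly the role of the paper's inequality $\|z^TYX\|_2^2\le\tfrac1{1-\|E\|_2}\|z^TYXR\|_2^2$. The constant-factor slack you flag at the end (and the reliance on the zero-slack KKT identity) is present in the paper's own final display as well, so no further work is needed.
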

\begin{proof}
We will follow a similar structure from Paul et al~\cite{paul2014random}. Define $ E  := V^T V  - V^T R R^T V$. Then for the optimal solution vectors $\alpha^*$, $\tilde{\alpha}^*$, the dual SVM objectives are given by:
\begin{align}
Z_{\textrm{opt}} &= 1^T \alpha^{*} - \frac{1}{2} \alpha^{*T} Y X X^{T} Y \alpha^{*} \\
\tilde{Z}_{\textrm{opt}} &= 1^T \tilde{\alpha}^{*} - \frac{1}{2} \tilde{\alpha}^{*T} Y X R R^{T} X^{T} Y \tilde{\alpha}^{*} 
\end{align}

	Let us first consider the objective function of the original problem at the optimal vector $\alpha^{*}$:
\begin{align}
\label{eq:zorig}
\nonumber
Z_{\textrm{opt}} &= 1^T \alpha^{*} - \frac{1}{2} \alpha^{*T} Y X X^{T} Y \alpha^{*} \\
\nonumber
                 &= 1^T \alpha^{*} - \frac{1}{2} \alpha^{*T} Y U \Sigma V^{T} R R^{T} V \Sigma U^{T} Y \alpha^{*} 
                    - \frac{1}{2} \alpha^{*T} Y U \Sigma E \Sigma U^{T} Y \alpha^{*} \\
                 &\ge \tilde{Z}_{\textrm{opt}}  - \frac{1}{2} \tilde{\alpha}^{*T} Y U \Sigma E \Sigma U^{T} Y \tilde{\alpha}^{*} 
\end{align}
where we substituted the vector $\tilde{\alpha}^*$ in the objective and utilize the fact that it results in a smaller objective value than the optimal.
Next, we consider the projected problem and lower bound it as follows:
\begin{align}
\label{eq:zproj}
\nonumber
\tilde{Z}_{\textrm{opt}} &= 1^T \tilde{\alpha}^{*} - \frac{1}{2} \tilde{\alpha}^{*T} Y X R R^{T} X^{T} Y \tilde{\alpha}^{*} \\
\nonumber
                 &= 1^T \tilde{\alpha}^{*} - \frac{1}{2} \tilde{\alpha}^{*T} Y U \Sigma V^{T} V \Sigma U^{T} Y \tilde{\alpha}^{*} 
                    - \frac{1}{2} \tilde{\alpha}^{*T} Y U \Sigma (-E) \Sigma U^{T} Y \tilde{\alpha}^{*} \\
                 &\ge Z_{\textrm{opt}}  - \frac{1}{2} \alpha^{*T} Y U \Sigma (-E) \Sigma U^{T} Y \alpha^{*} 
\end{align}
where we substituted the vector $\alpha^*$ in the objective as before and utilize the fact that it is smaller than the optimal value.
By sub-multiplicativity, we have by using the fact $V^T V = I$:
\begin{align}
\label{eq:submult}
\nonumber
\frac{1}{2} z^T Y U \Sigma G \Sigma U^T  Y z &\le \frac{1}{2} \|z^T Y U \Sigma\| \cdot \|G\|_2 \cdot \|  \Sigma U^T Y z\| \\
                                              & =  \frac{1}{2} \|G\|_2 \cdot \|z^T Y X\|_2^2 
\end{align}
for any vector $z$ and matrix $G$. 
Let us bound the following second-order term:
\begin{align*}
| z^T Y X R R^{T} X^T Y z  - z^T Y X X^T Y z| &= | z^T Y U \Sigma (V^T R R^T V  - V^T V) \Sigma U^T Y z| \\
                   &= | z^T Y U \Sigma (-E) \Sigma U^T Y z | \\
                   &\le \|E\|_2 \cdot \| z^T Y U \Sigma \|_2^2 \\
                   &= \|E\|_2 \cdot \| z^T Y X \|_2^2
\end{align*}
for any vector $z$. This gives us the following useful inequality:
\begin{align}
\label{eq:ineqobj}
\| z^T Y X \|_2^2 \le \frac{1}{1 - \|E\|_2} \|z^T Y X R\|_2^2
\end{align}

	Combining~\eqref{eq:zorig},~\eqref{eq:zproj},~\eqref{eq:submult}, and~\eqref{eq:ineqobj} for $z \in \{\alpha, \tilde{\alpha}\}$ and $G \in \{-E, E\}$, we have the 
following bounds:
\begin{align}
\nonumber
\tilde{Z}_{\textrm{opt}} &\ge Z_{\textrm{opt}} - \frac{1}{2} \|E\|_2 \cdot \|\alpha^{*T} Y X\|_2^2 \\
\nonumber
                    &= Z_{\textrm{opt}} - \|E\|_2\cdot Z_{\textrm{opt}} \\
                    &= (1 - \|E\|_2) Z_{\textrm{opt}}\\
\nonumber
Z_{\textrm{opt}} &\ge \tilde{Z}_{\textrm{opt}} - \frac{1}{2} \|E\|_2 \cdot \|\tilde{\alpha}^{*T} Y X \|_2^2 \\
\nonumber
                   &\ge  \tilde{Z}_{\textrm{opt}} - \frac{\|E\|_2}{1 - \|E\|_2}  \frac{\|\tilde{\alpha}^{*T} Y X R \|_2^2}{2} \\
\nonumber
                 &= \tilde{Z}_{\textrm{opt}} - \frac{\|E\|_2}{1 - \|E\|_2} \tilde{Z}_{\textrm{opt}} \\
                 &= ( 1  - \frac{\|E\|_2}{1 - \|E\|_2} ) \tilde{Z}_{\textrm{opt}}   
\end{align}
%
The bounds follow by using the following relations, 
$Z_{\textrm{opt}} = \frac{1}{2 \gamma^{*2}}$ and $\tilde{Z}_{\textrm{opt}} = \frac{1}{2 \tilde{\gamma}^{*2}}$:
\begin{align}
(1 - \frac{\|E\|}{1 - \|E\|_2}) \gamma^{*2} \le \tilde{\gamma}^{*2} \le \frac{1}{1 - \|E\|_2} \gamma^{*2}
\end{align}
Notice that we cannot now trivially project the data to zero ($XR$ to the zero matrix) which would have been acceptable if we had used the weaker version of the theorem as stated in~\cite{paul2014random}. 
\end{proof}

\section{Discussion}
 We have presented an efficient way to multiply by a Gaussian matrix, without actually computing the dense
 matrix product. Theorem~\ref{thm:main} provides our theoretical guarantees on this much faster transform,
 showing it has low variation distance to multiplication by a dense Gaussian matrix.

 Our transform is useful in a surprising number of applications --- here we apply our transform to NMF and SVM.
 The classical way of speeding up Gaussian transforms via the Fast Hadamard or FFT does not work in our setting since it misses large sections of the sphere.

 Our experiments on synthetic and real-world datasets for NMF showed that 
 the results obtained by our algorithm were on par with the state-of-the-art NMF algorithms such as
 SC, XRAY and SPA. In particular, for synthetic problems, we showed similar anchor recovery performance as random projection (GP) of 
~\cite{damle2014random} both in the noiseless and noisy cases. Also, the performance was remarkably similar to  GP when applied on the breast cancer dataset and also picked up activation patterns which might be of biological interest as previously noted in flow cytometry problems.
 Experiments on document classification tasks using the popular SVM formulation revealed that the new projection leads to faster 
 SVM solutions than previous methods. Previously, it was shown that while CountSketch led to faster projection times it did not lead to overall
 faster training time and in fact was found to be slower than random Gaussian projections (RG). Our new countGauss projection fixes this by sacrificing projection time compared to countSketch projection but leads to an overall faster SVM training time and thereby beats both random Gaussian and CountSketch-based SVM algorithms~\cite{paul2014random}.   
 We note that in practice for SVM, solution accuracy may be of critical importance rather than computation time and in these
 scenarios random projection based algorithms can be used to explore the optimal settings of the SVM parameters such as soft-margin. 
 In our experiments (not shown) we noticed that these lead to faster training times while not sacrificing test accuracy.

\section*{Acknowledgements}
Vamsi P. would like to acknowledge support from the following grant: NSF-IIP-1346452 and also RDI$^2$ at Rutgers University where this work was initiated. David W. would like to thank Sitan Chen and Jerry Li for helpful discussions, and would like to acknowledge the support from XDATA program of the Defense Advanced Research Projects Agency (DARPA), administered through Air Force Research Laboratory contract FA8750-12-C-0323.


\end{document}